\newtheorem{thm}{Theorem}[section]
\newtheorem{proposition}[thm]{Proposition}
\newtheorem{corollary}[thm]{Corollary}
\newtheorem{theorem}[thm]{Theorem}
\theoremstyle{definition}
\newtheorem{example}[thm]{Example}
\newtheorem{remark}[thm]{Remark}
\newcommand{\pyb}{\mathrm{YB}_\Omega}
\newcommand{\yb}{\mathrm{YB}}
\DeclareMathOperator{\rank}{rank}
\DeclareMathOperator{\End}{End}
\DeclareMathOperator{\mat}{Mat}
\DeclareMathOperator{\GL}{GL}
\DeclareMathOperator{\id}{id}
\newcommand{\ts}{\mathrm{T}}
\newcommand{\qqquad}{\qquad\quad}
\newcommand{\ybm}{\mathbf{Y}}
\newcommand{\fmp}{\mathbf{F}}
\newcommand{\dft}{\mathrm{d}}
\newcommand{\md}{\mathbf{G}}
\newcommand{\mm}{\mathbf{M}}
\newcommand{\zsp}{\mathbb{Z}_{>0}}
\newcommand{\lb}{\label}
\newcommand{\er}{\eqref}
\newcommand{\cl}{\colon}
\newcommand{\beq}{\begin{gather}}
\newcommand{\ee}{\end{gather}}
\title{\textbf{Yang--Baxter maps, Darboux transformations, and linear approximations of refactorisation problems}}
\author[2]{V.M. Buchstaber\thanks{buchstab@mi-ras.ru}}
\author{S. Igonin\thanks{s-igonin@yandex.ru}}
\author{S. Konstantinou-Rizos\thanks{skonstantin84@gmail.com}}
\author{M.M. Preobrazhenskaia\thanks{rita.preo@gmail.com}}
\affil{Centre of Integrable Systems, P.G. Demidov Yaroslavl State 
University, Russia}
\affil[2]{Steklov Mathematical Institute of Russian Academy of Sciences, Moscow, Russia}
\begin{document}

\maketitle

\begin{abstract}
Yang--Baxter maps (YB maps) are set-theoretical solutions to the quantum Yang--Baxter equation.
For a set $X=\Omega\times V$, where $V$ is a vector space and 
$\Omega$ is regarded as a space of parameters, a linear parametric YB map is a YB map
$Y\cl X\times X\to X\times X$ such that $Y$ is linear with respect to $V$
and one has $\pi Y=\pi$ for the projection $\pi\cl X\times X\to\Omega\times\Omega$.
These conditions are equivalent to certain nonlinear algebraic relations 
for the components of $Y$.
Such a map $Y$ may be nonlinear with respect to parameters from $\Omega$.

We present general results on such maps, 
including clarification of the structure of the algebraic relations that define them
and several transformations which allow one to obtain new such maps from known ones.
Also, methods for constructing such maps are described. 
In particular, developing an idea 
from [Konstantinou-Rizos S and Mikhailov A V 2013 J. Phys. A: Math. Theor. 46 425201],
we demonstrate how to obtain linear parametric YB maps 
from nonlinear Darboux transformations of some Lax operators 
using linear approximations of matrix refactorisation problems corresponding to Darboux matrices.
New linear parametric YB maps with nonlinear dependence on parameters are presented.
\end{abstract}

\maketitle

\bigskip

\noindent \textbf{PACS numbers:} 02.30.Ik, 02.90.+p, 03.65.Fd.

\noindent \textbf{Mathematics Subject Classification:} 37K35, 35Q55, 81R12.

\noindent \textbf{Keywords:} Yang--Baxter equation, parametric Yang--Baxter maps, Darboux transformations of Lax operators, integrable PDEs of NLS type

\section{Introduction}
\label{sintr}

The quantum Yang--Baxter equation is one of 
the most fundamental equations in mathematical physics 
and has applications in many diverse branches of physics and mathematics, 
including statistical mechanics, quantum field theories, the theory of knots and braids.
Set-theoretical solutions of the quantum Yang-Baxter (YB) equation have been intensively studied
by many authors after the work of Drinfeld~\cite{Drin92}.
Even before that, examples of such solutions were constructed by Sklyanin~\cite{skl88}.
Veselov~\cite{Veselov} proposed the term ``Yang--Baxter maps'' (YB maps)
for such solutions and initiated the study of the dynamical aspects for them.
Before~\cite{Veselov}, one of us~\cite{Buchstaber} introduced the term ``Yang--Baxter mappings''
in the study of algebraic aspects for such maps.
Relations of YB maps with integrable systems (including integrable PDEs and lattice equations) is a very active area of research 
(see, e.g.,~\cite{abs2003,ABS-2005,Vincent,dim-mull,hjn-book,Pavlos2019,Pavlos,Sokor-Sasha,
Kouloukas, Kouloukas2,PT,PTV,PSTV,Veselov2,Veselov3} and references therein).

Let $V$ be a set.
A YB map $Y:V\times V\to V\times V$ is 
a set-theoretical solution to the quantum YB equation
\begin{gather}
\label{ybeq}
Y^{12}\circ Y^{13}\circ Y^{23}=Y^{23}\circ Y^{13}\circ Y^{12}.
\end{gather}
Here $Y^{ij}\cl V\times V\times V\to V\times V\times V$, $\,i,j=1,2,3$, $i<j$, 
is the map acting as $Y$ on the $i$th and $j$th factors of $V\times V\times V$ 
and acting as identity on the remaining factor.

A parametric YB map 
\begin{gather}
\label{ParamYB}
Y_{a,b}\colon V\times V\to V\times V,\qquad
Y_{a,b}(x,y)=\big(u_{a,b}(x,y),\,v_{a,b}(x,y)\big),\qquad x,y\in V,
\qquad a,b\in\Omega,
\end{gather}
depends on parameters $a,b\in\Omega$ from some parameter set $\Omega$ and obeys 
the parametric YB equation
\begin{gather}
\label{YB_eq}
Y^{12}_{a,b}\circ Y^{13}_{a,c} \circ Y^{23}_{b,c}=
Y^{23}_{b,c}\circ Y^{13}_{a,c} \circ Y^{12}_{a,b}\qquad\text{for all }\,a,b,c\in\Omega.
\end{gather}
The precise definition of such maps is given in Section~\ref{sybm}.

In this paper we mostly study the case when $V$ is a vector space over a field~$\mathbb{K}$ and 
for all $a,b\in\Omega$ the map $Y_{a,b}$ given by~\er{ParamYB} is $\mathbb{K}$-linear. 
Note that usually $\Omega$ is an open subset of another vector space, 
and the dependence of~$Y_{a,b}$ on the parameters~$a,b$ is nonlinear.
(See also Remark~\ref{rrdp} on the case when $\Omega$ is an algebraic variety.)

Examples of such maps related to integrable PDEs of 
Kadomtsev--Petviashvili (KP) and nonlinear Schr\"odinger (NLS) types
can be found in~\cite{dim-mull,Sokor-Sasha}. These examples 
are discussed in Example~\ref{exdmh} and in Section~\ref{saddnls} of the present paper.
\begin{remark}
\lb{rnonl}
A parametric YB map~\er{ParamYB} with parameters $a,b\in\Omega$
can be interpreted as the following YB map~$\ybm$ without parameters
\begin{gather}
\lb{ypsv}
\begin{gathered}
\ybm\cl (\Omega\times V)\times (\Omega\times V)\to(\Omega\times V)\times (\Omega\times V),\\
\ybm\big((a,x),(b,y)\big)=\big((a,u_{a,b}(x,y)),(b,v_{a,b}(x,y))\big),
\end{gathered}
\end{gather}
which satisfies $\pi\ybm=\pi$ for the projection $\pi\cl (\Omega\times V)\times (\Omega\times V)\to\Omega\times\Omega$.
Thus one can say that 
we study YB maps of the form~\er{ypsv} which are linear with respect to~$V$ 
and may be nonlinear with respect to~$\Omega$.
However, often it is useful to keep $a,b$ as parameters and to work with~$Y_{a,b}$.
\end{remark}

Note that we do not impose any nondegeneracy conditions on YB maps.
In particular, the maps are not required to be invertible.

For a vector space $V$, a parametric map~\er{ParamYB} is linear if and only if it is of the form
\begin{gather}
\label{ilpyb}
Y_{a,b}\colon V\times V\to V\times V,\qquad
Y_{a,b}(x,y)=\big(A_{a,b}x+B_{a,b}y,\,C_{a,b}x+D_{a,b}y\big),\qquad x,y\in V,
\qquad a,b\in\Omega,
\end{gather}
for some linear maps $A_{a,b},B_{a,b},C_{a,b},D_{a,b}$ from $V$ to $V$, 
which may depend nonlinearly on $a,b\in\Omega$.

In Proposition~\ref{ABCD} we prove that a linear parametric map~\er{ilpyb}
satisfies the parametric YB equation~\eqref{YB_eq} (i.e., \er{ilpyb}
is a parametric YB map) if and only if $A_{a,b},B_{a,b},C_{a,b},D_{a,b}$
obey the algebraic relations \eqref{ABC-relations-a}--\eqref{ABC-relations-e} 
for all values of the parameters $a,b,c\in\Omega$.
The case when $A_{a,b},B_{a,b},C_{a,b},D_{a,b}$ do not depend on $a,b$ 
was studied in~\cite{Buchstaber,ess99} (see Remark~\ref{rnoab} for more details).

Let $\pyb(V)$ be the set of linear parametric YB maps~\er{ilpyb}. 
An element $Y_{a,b}\in\pyb(V)$ given by~\eqref{ilpyb} 
is written as $Y_{a,b}=(A_{a,b},D_{a,b};B_{a,b},C_{a,b})\in\pyb(V)$.
In Theorem~\ref{thptr} we describe several transformations which
allow one to obtain new such maps from a given one.

From Theorem~\ref{thptr} we deduce Corollary~\ref{clc}, namely, 
if $(A_{a,b},D_{a,b};B_{a,b},C_{a,b})\in\pyb(V)$, 
then for any nonzero constant $l\in\mathbb{K}$ one has 
$(lA_{a,b},l^{-1}D_{a,b};B_{a,b},C_{a,b})\in\pyb(V)$.
Thus, from a given linear parametric YB map~\er{ilpyb},
we derive a family of linear parametric YB maps~\er{lald} depending on~$l$.

Theorem~\ref{thme} contains a number of identities for linear parametric YB maps.
As discussed in Remark~\ref{rpequiv}, this clarifies the structure of relations
\eqref{ABC-relations-a}--\eqref{ABC-relations-e}.

Applying Corollary~\ref{clc} to the YB map~\er{mhm} from~\cite{dim-mull},
for each nonzero $l\in\mathbb{C}$ we obtain a linear parametric YB map~\er{ylab},
which for $l\neq 1$ is new.

It is known that one can sometimes construct rational parametric YB maps 
from matrix refactorisation problems corresponding to Darboux matrices for 
Lax operators (see, e.g.,~\cite{Sokor-Sasha}).
However, usually such a refactorisation problem gives an algebraic variety~\cite{Sokor-Sasha}, 
and it is often difficult to represent the variety as the graph of a rational map; 
namely, one cannot always solve the corresponding polynomial equations in order to define a map.
Then one can try to find a linear approximation to the latter.
Developing an idea from~\cite{Sokor-Sasha} (see Remark~\ref{rmikh} below),
in Section~\ref{NLS-type-maps} we study linear approximations of such
refactorisation problems and present examples where this gives linear parametric YB maps.
We present several explicit examples of this procedure for Lax operators of NLS type.
It would be interesting to apply the described method to Lax operators of other types.

\begin{remark}
\lb{rmikh}
The idea to use linear approximations of Darboux matrix refactorisation problems 
for obtaining linear parametric YB maps appeared in~\cite{Sokor-Sasha},
where one example of such a map was obtained (the map~\er{ybmss} in the present paper).

Developing this idea, in Section~\ref{NLS-type-maps} 
we present a detailed description of a method to obtain such maps.
In particular, in Remarks~\ref{reps1},~\ref{reps2},~\ref{reps3}, using Proposition~\ref{paeps},
we explain why the linear parametric maps derived from the considered
linear approximations of Darboux matrix refactorisation problems
satisfy the parametric YB equation. 
In~\cite{Sokor-Sasha} this is not explained.

As said above, the map~\er{ybmss} was obtained in~\cite{Sokor-Sasha}.
Applying Corollary~\ref{clc} to the map~\eqref{ybmss}, 
for each nonzero constant $l\in\mathbb{C}$ we derive the linear parametric YB map~\er{lybmss}, 
which for $l\neq 1$ is new.
\end{remark}

In Section~\ref{sybmr}, after recalling the well-known relation between matrix refactorisation problems 
and YB maps (see, e.g., \cite{Veselov,Veselov2,Kouloukas}),
in Proposition~\ref{pasa} we present a straightforward generalisation of this relation
to the case of refactorisation problems in arbitrary associative algebras.
Using Proposition~\ref{pasa}, we prove Proposition~\ref{paeps}, 
which explains why the linear parametric maps 
arising from the linear approximations of the matrix refactorisation problems
considered in Section~\ref{NLS-type-maps} obey the parametric YB equation~\eqref{YB_eq}
(see Remarks~\ref{reps1},~\ref{reps2},~\ref{reps3}).

As shown in the proof of Proposition~\ref{paeps}, 
for a given associative algebra $\mathfrak{B}$, this proposition 
deals with refactorisation in the associative algebra
of formal sums $m_1+\varepsilon m_2$, where $m_1,m_2\in\mathfrak{B}$ 
and $\varepsilon$ is a formal symbol assumed to satisfy~$\varepsilon^2=0$.

In Section~\ref{sYBmvb} we present the following example of constructing linear parametric YB maps (with nonlinear dependence on parameters) from a nonlinear nonparametric YB map.

For any group $\md$, one has the nonlinear nonparametric YB map~\er{fxyx}
from~\cite{Drin92}.
Taking $\md=\GL_n(\mathbb{K})\subset\mat_n(\mathbb{K})$ for a positive integer~$n$, 
for any abelian subgroup $\Omega\subset\md$ and 
any nonzero constant $l\in\mathbb{K}$ we construct from~\er{fxyx}
a new linear parametric YB map~\er{lpyblg}.
The construction consists of three steps:
\begin{itemize}
\item First assume that $\mathbb{K}$ is either $\mathbb{R}$ or $\mathbb{C}$.
Then the set $\md\times\md=\GL_n(\mathbb{K})\times\GL_n(\mathbb{K})$
is a manifold, the YB map~$\fmp$ given by~\er{fxyx} is an analytic diffeomorphism, 
hence we can consider the differential~$\dft\fmp$ of~$\fmp$, which is given by~\er{dfxyx},
where $\mm=\mat_n(\mathbb{K})$.
Since $\fmp$ is a YB map, its differential $\dft\fmp$ is a YB map as well.
\item 
Let $\Omega\subset\md$ be an abelian subgroup of~$\md$.
Denote by $\ybm$ the restriction of the map~\er{dfxyx} to the subset
$(\Omega\times\mm)\times(\Omega\times\mm)\subset(\md\times\mm)\times(\md\times\mm)$.
Computing~$\ybm$, we derive the YB map~\er{ybmgl}, which can be interpreted 
as the linear parametric YB map~\er{pyblg} with parameters $a,b\in\Omega$.
To obtain~\er{ybmgl} from~\er{dfxyx}, 
we use the fact that $\fmp(a,b)=(a,aba^{-1})=(a,b)$ for any $a,b\in\Omega\subset\md$, 
because $ab=ba$.
\item For each nonzero $l\in\mathbb{K}$, 
applying Corollary~\ref{clc} to the map~\eqref{pyblg}, 
we get the linear parametric YB map~\er{lpyblg} 
with nonlinear dependence on the parameters $a,b$.
In the described construction of~\er{lpyblg} we have assumed
that $\mathbb{K}$ is either $\mathbb{R}$ or $\mathbb{C}$, 
but one can check that \er{lpyblg} is a parametric YB map for any field~$\mathbb{K}$.
\end{itemize}
Thus, \er{lpyblg} represents an infinite collection of new 
linear parametric YB maps corresponding to 
abelian subgroups~$\Omega$ of the matrix group $\GL_n(\mathbb{K})$.
A detailed description of this construction is given in Section~\ref{sYBmvb}.

In Section~\ref{sgener}, in the framework of fibre bundles and vector bundles, 
we discuss certain generalisations
of some notions and constructions considered in this paper.

Section~\ref{sconc} concludes the paper 
with some remarks and comments on how the results of this paper can be extended.

\begin{remark}
\lb{rinvol}
Recall that, for a set $S$, a map $F\cl S\to S$ 
is said to be \emph{involutive} if $F^2=\id$. 

In the study of the dynamics of a map $\tilde{F}\cl S\to S$, 
one considers its powers $\tilde{F}^k\cl S\to S$ for nonnegative integers $k$.
From this point of view, noninvolutive maps are more interesting than involutive ones.

Known examples of YB maps very often turn out to be involutive.
The maps~\er{ylab},~\er{lybmss},~\er{lpyblg} are noninvolutive.
\end{remark}

\begin{remark}
\lb{rnoab}
As a preparation for the study of linear parametric YB maps,
in Section~\ref{swp} we recall some results (mostly from~\cite{Buchstaber})
on linear YB maps without parameters.
In particular, the result of Proposition~\ref{pabcd} on the structure of
linear YB maps without parameters appeared indepedently in~\cite{Buchstaber} and~\cite{ess99}
in different notation described in Remark~\ref{rchnot}.
As noticed in~\cite{ess99}, the result of Proposition~\ref{pabcd} can also be easily deduced from formulas in~\cite{hi97}.

As discussed in Remark~\ref{rreduce} and in Section~\ref{swp},
in the case when $A_{a,b},B_{a,b},C_{a,b},D_{a,b}$ do not depend on~$a,b$
some results of Section~\ref{slpYBm} reduce to ones from~\cite{Buchstaber}.
\end{remark}

\section{General results on linear parametric Yang--Baxter maps}
\lb{sgr}

\subsection{Yang--Baxter maps}
\label{sybm}

Let $V$ be a set. 
A \textit{Yang--Baxter \textup{(}YB\textup{)} map} is a map 
$$
Y\colon V\times V\to V\times V,\qquad Y(x,y)=(u(x,y),v(x,y)),\qquad x,y\in V,
$$
satisfying the YB equation~\eqref{ybeq}.
The terms $Y^{12}$, $Y^{13}$, $Y^{23}$ in~\eqref{ybeq} are maps $V\times V\times V\to V\times V\times V$ defined as follows 
\begin{gather}
\lb{y1223}
Y^{12}(x,y,z)=\big(u(x,y),v(x,y),z\big),\qqquad
Y^{23}(x,y,z)=\big(x,u(y,z),v(y,z)\big),\\
\lb{y13}
Y^{13}(x,y,z)=\big(u(x,z),y,v(x,z)\big),\qqquad x,y,z\in V.
\end{gather}
%where $\id\cl V\to V$ is the identity map, 
%and $P^{12}\colon V\times V\times V\to V\times V\times V$ 
%is given by $P^{12}(x,y,z)=(y,x,z)$.

A \textit{parametric YB map} $Y_{a,b}$ is a family of maps~\er{ParamYB}
depending on two parameters $a,b\in\Omega$ from some parameter set~$\Omega$ 
and satisfying the parametric YB equation~\eqref{YB_eq}.
The terms $Y^{12}_{a,b}$, $Y^{13}_{a,c}$, $Y^{23}_{b,c}$
in~\eqref{YB_eq} are maps $V\times V\times V\to V\times V\times V$ 
defined similarly to~\er{y1223},~\er{y13}, adding the parameters $a,b,c$.
For instance, 
$$
Y^{13}_{a,c}(x,y,z)=\big(u_{a,c}(x,z),y,v_{a,c}(x,z)\big),\qqquad
x,y,z\in V.
$$
In general, $V$ and $\Omega$ can be arbitrary sets.
See also Remark~\ref{rrdp} on the case when $\Omega$ is an algebraic variety.

\begin{remark}
\lb{rrdp}
Suppose that $\Omega$ is an algebraic variety and consider the Zariski topology
on the algebraic variety $\Omega\times\Omega$.
Quite often one meets the following situation.
A parametric map $Y_{a,b}$ depends rationally on parameters $a,b$ from~$\Omega$, 
and there is an open dense subset $W\subset\Omega\times\Omega$ such 
that $Y_{a,b}$ is defined for all $(a,b)\in W\subset\Omega\times\Omega$.

If such $Y_{a,b}$ satisfies the parametric YB equation
$Y^{12}_{a,b}\circ Y^{13}_{a,c} \circ Y^{23}_{b,c}=
Y^{23}_{b,c}\circ Y^{13}_{a,c} \circ Y^{12}_{a,b}$
for all points $(a,b)$, $(a,c)$, $(b,c)$ from $W\subset\Omega\times\Omega$,
then $Y_{a,b}$ can be called a parametric YB map. 
(Although $Y_{a,b}$ may be undefined for some points $(a,b)\in(\Omega\times\Omega)\setminus W$.)
Results and methods of this paper are valid for such maps.

For example, the YB map~\er{ybmss} from~\cite{Sokor-Sasha} is of this type,
because \er{ybmss} is undefined for $(a,b)$ satisfying $a+b=0$.
\end{remark}

\subsection{Notation}
\label{snot}

Let $V$ be a vector space over a field $\mathbb{K}$.
By $\End(V)$ we denote the space of linear maps $V\to V$. 
The space $\End(V)$ is an associative algebra with respect to composition of maps.

Let $A, B, C, D \in\End(V)$. 
Consider the vector space ${V}\times {V}\cong V\oplus V$.
The linear map
\begin{gather}
\notag
Y\colon{V}\times {V}\to{V}\times {V},\qqquad
Y(x,y)=(Ax+By,\,Cx+Dy),\qqquad x,y\in V,
\end{gather}
is written as $Y=(A,D;B,C)\in\End(V\times V)$, where 
$\End(V\times V)$ is the space of linear maps ${V}\times {V}\to{V}\times {V}$.

$\zsp$ is the set of positive integers. Let $n\in\zsp$.
For any commutative algebra $\mathcal{A}$, we denote
by $\mat_n(\mathcal{A})$ the associative algebra of $n\times n$ matrices 
with entries from~$\mathcal{A}$.

\subsection{Linear Yang--Baxter maps without parameters}
\label{swp}

Let $V$ be a vector space over a field $\mathbb{K}$.
In this subsection, we study the YB equation~\eqref{ybeq}
for a linear map $Y\colon{V}\times {V}\to{V}\times {V}$ given by
\begin{gather}
\label{yabcd}
Y(x,y)=(Ax+By,\,Cx+Dy),\qquad x,y\in V,\qquad A,B,C,D\in\End(V).
\end{gather}
So here we consider linear YB maps without parameters.
Many examples of such maps can be found in~\cite{Buchstaber,ess99}.
Results presented in this subsection serve as a preparation for the study 
of linear parametric YB maps in Subsection~\ref{slpYBm}.

Let $\yb(V)\subset\End(V\times V)$ be the subset of maps $Y\in\End(V\times V)$ satisfying~\eqref{ybeq}. 
An element $Y\in\yb(V)$ given by~\eqref{yabcd} is written as $Y=(A,D;B,C)\in\yb(V)$.

The result of Proposition~\ref{pabcd} is presented in~\cite{Buchstaber,ess99} 
in different notation described in Remark~\ref{rchnot} below.
As noticed in~\cite{ess99}, this result can also be easily deduced from formulas in~\cite{hi97}.

\begin{proposition}[\cite{Buchstaber,ess99}]
\label{pabcd}
A map $Y\in\End(V\times V)$ given by \eqref{yabcd} satisfies~\eqref{ybeq}
if and only if the maps $A,B,C,D\in\End(V)$ in~\eqref{yabcd} obey the following relations 
\begin{subequations}
\begin{gather}
\label{lybr1}
C^2=C-DCA, \qqquad B^2=B-ABD, \\
\label{lybr2}
DC-CD=DCB,\qqquad AB-BA=ABC,\\
\label{lybr3}
CA-AC=BCA,\qqquad  BD-DB=CBD,\\
\label{lybr4}
DA-AD=BCB-CBC.
\end{gather}
\end{subequations}
Thus, for $A,B,C,D\in\End(V)$ we have $(A,D;B,C)\in\yb(V)$ 
if and only if $A$, $B$, $C$, $D$ obey relations~\eqref{lybr1}--\eqref{lybr4}.
\end{proposition}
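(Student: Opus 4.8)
The plan is to prove Proposition~\ref{pabcd} by a direct computation: substitute the linear map~\er{yabcd} into the Yang--Baxter equation~\er{ybeq} and compare the two sides. It is convenient to regard $Y=(A,D;B,C)$ as the linear operator on $V\oplus V$ with block matrix $\left(\begin{smallmatrix}A&B\\C&D\end{smallmatrix}\right)$; then, by~\er{y1223} and~\er{y13}, the operators $Y^{12},Y^{13},Y^{23}$ on $V\oplus V\oplus V$ are obtained by inserting this $2\times2$ block matrix in the $(1,2)$-, $(1,3)$-, and $(2,3)$-block positions respectively and the identity in the remaining diagonal block. Equivalently, I would simply evaluate every composition on a generic triple $(x,y,z)\in V\times V\times V$.

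First I would compute the left-hand side $Y^{12}\circ Y^{13}\circ Y^{23}$ of~\er{ybeq}: apply $Y^{23}$ to $(x,y,z)$, then $Y^{13}$, then $Y^{12}$, using~\er{y1223},~\er{y13} and keeping every word in $A,B,C,D$ in its correct order. Since $Y$ is $\mathbb{K}$-linear, the outcome is a triple each of whose three entries has the form $Px+Qy+Rz$ for certain words $P,Q,R\in\End(V)$. Then I would repeat this for the right-hand side $Y^{23}\circ Y^{13}\circ Y^{12}$, obtaining a second such triple.

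The two sides agree as maps $V\times V\times V\to V\times V\times V$ if and only if their coefficient operators agree. Setting $y=z=0$, then $x=z=0$, then $x=y=0$ isolates the coefficients of $x$, $y$, and $z$ in each of the three components, which yields nine identities in $\End(V)$. Two of them --- the coefficient of $x$ in the first component and of $z$ in the third --- are the tautologies $A^2=A^2$ and $D^2=D^2$; the remaining seven are exactly relations~\er{lybr1}--\er{lybr4}. (For orientation: the coefficients of $z$ in the first component and of $x$ in the third give the two relations~\er{lybr1}; the coefficients of $y$ in the first and third components give~\er{lybr2}; the coefficients of $x$ and $z$ in the second component give~\er{lybr3}; and the coefficient of $y$ in the second component gives~\er{lybr4}.) Since each step of the comparison is an equivalence, this proves both implications at once, and the final sentence of the statement is merely a restatement in the notation $(A,D;B,C)$, i.e.\ a reformulation of what it means for $(A,D;B,C)$ to lie in $\yb(V)$.

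I do not expect any genuine difficulty here: the only thing to watch is that $\End(V)$ is noncommutative, so one must be careful with the order of the factors inside each monomial and with signs when moving terms across an equality; there is no conceptual obstacle, just bookkeeping.
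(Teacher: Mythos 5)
Your computation is correct, and your bookkeeping of which coefficient yields which relation checks out exactly (two tautologies $A^2=A^2$, $D^2=D^2$ plus the seven relations \eqref{lybr1}--\eqref{lybr4}). The paper itself does not prove Proposition~\ref{pabcd} but cites \cite{Buchstaber,ess99} for it; however, your direct substitute-and-compare-coefficients argument is precisely the method the paper uses to prove the parametric generalisation, Proposition~\ref{ABCD}, of which this statement is the parameter-independent special case.
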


\begin{remark}
\label{rchnot}
In~\cite{Buchstaber,ess99} the following braid relation
\begin{gather}
\label{brr}
\tilde{Y}^{12}\circ \tilde{Y}^{23}\circ\tilde{Y}^{12}=
\tilde{Y}^{23}\circ \tilde{Y}^{12}\circ\tilde{Y}^{23}
\end{gather}
is studied instead of the YB equation~\eqref{ybeq}.

Consider the permutation map $P\in\End(V\times V)$, $P(x,y)=(y,x)$.
A map $Y\in\End(V\times V)$ given by~\eqref{yabcd} satisfies~\eqref{ybeq} if and only if 
the map $\tilde{Y}=PY$ satisfies~\eqref{brr}.

From~\eqref{yabcd} one obtains $PY(x,y)=(Cx+Dy,\,Ax+By)$ for $x,y\in V$.
Because of this, formulas~\eqref{lybr1}--\eqref{lybr4} appear in~\cite{Buchstaber,ess99} 
in different notation: $A$ is interchanged with $C$, and $B$ is interchanged with $D$.

Equations~\er{cbcbd}, \er{dbdc}, \er{bmcdab} below  
and the results of Proposition~\ref{cmeb} appear 
in~\cite{Buchstaber} with the same notation change.
\end{remark}

The following relations are presented in~\cite{Buchstaber} as consequences of 
relations~\er{lybr1}--\er{lybr4}
\begin{gather}
\lb{cbcbd}
[C+B-CB,D]=0,\qqquad [C+B-BC,A]=0,\\
\lb{dbdc}
[D-BDC,A]=[BC,C+B],\qqquad [A-CAB,D]=[CB,C+B],\\
\label{bmcdab}
\begin{pmatrix} 
C & D\\ 
A & B	
\end{pmatrix}
\begin{pmatrix} 
C & BD\\ 
CA & B	
\end{pmatrix}=
\begin{pmatrix} 
C & BD\\ 
CA & B	
\end{pmatrix},
\end{gather}
where $[\cdot,\cdot]$ denotes the commutator of maps in $\End(V)$.
Note that relations~\er{cbcbd} are equivalent to
\begin{gather*}
[(C-\id)(B-\id),D]=0,\qqquad [(B-\id)(C-\id),A]=0,
\end{gather*}
where $\id\cl V\to V$ is the identity map.

The result of Proposition~\ref{thtr} below is presented in~\cite{Buchstaber}, 
using the notation described in Remark~\ref{rchnot}.
This result follows from Proposition~\ref{rchnot}.

\begin{proposition}[\cite{Buchstaber}]
\label{thtr}
For any vector space $V$ over a field $\mathbb{K}$,
the set $\yb(V)$ is invariant under the following transformations
\begin{gather}
\label{dacb}
(A,D;B,C)\mapsto (D,A;C,B),\\
\label{ladl}
\begin{gathered}
(A,D;B,C)\mapsto (LA,DL^{-1};B,C)\\
\text{for any invertible $L\in\End(V)$ that commutes with $B$, $C$, $AD$}.
\end{gathered}
\end{gather}
That is, if $(A,D;B,C)\in\yb(V)$ then $(D,A;C,B)\in\yb(V)$ and $(LA,DL^{-1};B,C)\in\yb(V)$.

Let $V=\mathbb{K}^n$ for some $n\in\zsp$. 
The space $\End(\mathbb{K}^n)$ is identified with $\mat_n(\mathbb{K})$, and we have $\yb(\mathbb{K}^n)\subset\big(\mat_n(\mathbb{K})\big)^4$.
The set $\yb(\mathbb{K}^n)$ is invariant also under the transformation
\begin{gather}
\label{ttabcd}
(A,D;B,C)\mapsto (D^\ts,A^\ts;B^\ts,C^\ts),
\end{gather}
where $\ts$ denotes the transpose operation in $\mat_n(\mathbb{K})$.
\end{proposition}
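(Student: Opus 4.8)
The plan is to verify each of the three transformations separately, using the characterisation of $\yb(V)$ by the relations \eqref{lybr1}--\eqref{lybr4} provided by Proposition~\ref{pabcd}. This reduces every assertion to a purely algebraic check: take $A,B,C,D\in\End(V)$ satisfying \eqref{lybr1}--\eqref{lybr4}, and show that the transformed quadruple satisfies the same system.

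For the swap \eqref{dacb}, sending $(A,D;B,C)\mapsto(D,A;C,B)$, I would observe that relations \eqref{lybr1}--\eqref{lybr4} are globally invariant under the simultaneous substitution $A\leftrightarrow D$, $B\leftrightarrow C$: the two equations in \eqref{lybr1} are interchanged, likewise the two in \eqref{lybr2}, and again the two in \eqref{lybr3}, while \eqref{lybr4} becomes $AD-DA=CBC-BCB$, which is just \eqref{lybr4} multiplied by $-1$. So the system is preserved and $(D,A;C,B)\in\yb(V)$. Alternatively, and perhaps more cleanly, one can argue conceptually: conjugating the YB equation \eqref{ybeq} by the permutation $P\in\End(V\times V)$, $P(x,y)=(y,x)$, sends a solution to a solution, and $P\circ Y\circ P$ applied to the map \eqref{yabcd} with matrix $\left(\begin{smallmatrix}A&B\\C&D\end{smallmatrix}\right)$ gives the map with matrix $\left(\begin{smallmatrix}D&C\\B&A\end{smallmatrix}\right)$, i.e.\ exactly $(D,A;C,B)$; one must also check that $P^{\otimes 3}$ conjugation carries $Y^{12},Y^{13},Y^{23}$ to the corresponding terms for $P\circ Y\circ P$, which it does.

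For the rescaling \eqref{ladl}, let $L\in\End(V)$ be invertible and commute with $B$, $C$, and $AD$. Set $A'=LA$, $D'=DL^{-1}$, $B'=B$, $C'=C$; note $A'D'=LADL^{-1}=AD$ since $L$ commutes with $AD$, and similarly, writing $\tilde A=L^{-1}A'=A$ and using $[L,B]=[L,C]=0$ freely, each relation in \eqref{lybr1}--\eqref{lybr4} should reduce to the original one. Concretely: $C'^2 = C^2 = C - DCA = C' - (DL^{-1})C(LA) = C' - D'C'A'$ because $L^{-1}$ can be moved past $C$; the relation $B'^2=B'-A'B'D'$ reads $B^2 = B - LAB DL^{-1} = B - L(ABD)L^{-1}$, and since $ABD = B - B^2$ by \eqref{lybr1} and $L$ commutes with $B$, the conjugation is trivial; and so on through \eqref{lybr4}, where $D'A'-A'D' = DL^{-1}LA - LADL^{-1} = DA - LADL^{-1}$, and one uses $LADL^{-1}=AD$ together with $DA-AD = BCB - CBC = B'C'B'-C'B'C'$. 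The commutativity hypotheses on $L$ are exactly what is needed to absorb all the stray factors of $L^{\pm1}$; I expect this bookkeeping to be the most laborious part, though entirely routine.

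For the transpose transformation \eqref{ttabcd} on $V=\mathbb{K}^n$, I would apply the anti-automorphism $M\mapsto M^\ts$ of $\mat_n(\mathbb{K})$, which reverses products, $(MN)^\ts=N^\ts M^\ts$, and hence sends commutators to their negatives: $[M,N]^\ts=-[M^\ts,N^\ts]$. Transposing \eqref{lybr1}--\eqref{lybr4} and reversing the order of all factors turns, e.g., $C^2=C-DCA$ into $(C^\ts)^2 = C^\ts - A^\ts C^\ts D^\ts$, which is precisely the relation $B'^2 = B' - A'B'D'$ for the quadruple $A'=D^\ts$, $D'=A^\ts$, $B'=C^\ts$, $C'=B^\ts$ — here one must track carefully that the transpose of a relation in $(A,B,C,D)$, after the renaming $A'=D^\ts,\ D'=A^\ts,\ B'=C^\ts,\ C'=B^\ts$, lands inside the same system. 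Relations \eqref{lybr2} and \eqref{lybr3} get swapped among themselves in this process (since transposing and reversing turns $DC-CD=DCB$ into a relation that, after renaming, is one of the commutator relations), and \eqref{lybr4} maps to itself up to sign. So the transposed, renamed quadruple again satisfies \eqref{lybr1}--\eqref{lybr4}, giving $(D^\ts,A^\ts;B^\ts,C^\ts)\in\yb(\mathbb{K}^n)$. The only mild subtlety — and the step I would be most careful about — is getting the pairing of renamed variables consistent across all seven relations simultaneously, rather than checking each in isolation; once the correct dictionary $(A,B,C,D)\mapsto(D^\ts,C^\ts,B^\ts,A^\ts)$ is fixed, everything follows from the anti-automorphism property.
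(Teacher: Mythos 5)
Your overall strategy --- reduce everything to the algebraic characterisation of Proposition~\ref{pabcd} and check that each transformation permutes the relations \eqref{lybr1}--\eqref{lybr4} --- is exactly what the paper intends: it offers no proof of Proposition~\ref{thtr} beyond asserting that it follows from Proposition~\ref{pabcd}. Your verifications of \eqref{dacb} and \eqref{ladl} are correct: for the swap, the two equations inside each of \eqref{lybr1}, \eqref{lybr2}, \eqref{lybr3} are interchanged and \eqref{lybr4} changes sign; for the rescaling, the hypotheses $[L,B]=[L,C]=[L,AD]=0$ (together with $ABD=B-B^2$ in the one relation where $L$ does not cancel on the nose) absorb all stray factors of $L^{\pm1}$. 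The alternative conceptual argument for \eqref{dacb} is also sound, with the small correction that one conjugates the YB equation on $V\times V\times V$ by the transposition of the first and third factors, not by ``$P^{\otimes 3}$''.

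The transpose part, however, contains a genuine error. You use the dictionary $A'=D^\ts$, $D'=A^\ts$, $B'=C^\ts$, $C'=B^\ts$, i.e.\ you swap $B$ and $C$ in addition to transposing, whereas \eqref{ttabcd} is $(A,D;B,C)\mapsto(D^\ts,A^\ts;B^\ts,C^\ts)$, which keeps $B'=B^\ts$ and $C'=C^\ts$. With your stated dictionary the check fails: the relation $C'^2=C'-D'C'A'$ reads $(B^\ts)^2=B^\ts-A^\ts B^\ts D^\ts$, i.e.\ $B^2=B-DBA$ after transposing back, and this would require $DBA=ABD$, which is not a consequence of \eqref{lybr1}--\eqref{lybr4} in general. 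The correct dictionary $A'=D^\ts$, $D'=A^\ts$, $B'=B^\ts$, $C'=C^\ts$ does work: the anti-automorphism $M\mapsto M^\ts$ sends each equation of \eqref{lybr1} and the equation \eqref{lybr4} to itself and interchanges \eqref{lybr2} with \eqref{lybr3}. Note also that the one computation you actually display ($C^2=C-DCA$ transposing to $(C^\ts)^2=C^\ts-A^\ts C^\ts D^\ts$, read as $B'^2=B'-A'B'D'$) is consistent with yet a third dictionary, namely $A'=A^\ts$, $D'=D^\ts$, $B'=C^\ts$, $C'=B^\ts$; that is, you are verifying the transformation $(A,D;B,C)\mapsto(A^\ts,D^\ts;C^\ts,B^\ts)$ of Remark~\ref{rctr}, not \eqref{ttabcd}. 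That map is also a symmetry, and composing it with \eqref{dacb} recovers \eqref{ttabcd}, so your argument is salvageable; but as written the stated dictionary, the displayed computation, and the claimed conclusion do not match one another, and the stated dictionary itself does not preserve the system.
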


\begin{remark}
\lb{rctr}
Taking the composition of the transformations~\er{ttabcd},~\er{dacb}, 
we see that the set $\yb(\mathbb{K}^n)$ is invariant also under the transformation
$(A,D;B,C)\mapsto (A^\ts,D^\ts;C^\ts,B^\ts)$.
\end{remark}

For completeness, we present a proof for the following result from~\cite{Buchstaber}, 
which is stated in~\cite{Buchstaber} without proof as a consequence of~\eqref{bmcdab}.
\begin{proposition}[\cite{Buchstaber}]
\label{cmeb}
Let $n\in\zsp$.
Let $A,B,C,D\in\mat_n(\mathbb{K})$ such that $(A,D;B,C)\in\yb(\mathbb{K}^n)$.
Then every nonzero column of the matrix 
$
\begin{pmatrix} 
C & BD \\ 
CA & B	
\end{pmatrix}
$ 
is an eigenvector of the matrix 
$
\begin{pmatrix} 
C & D\\ 
A & B	
\end{pmatrix}
$ 
with eigenvalue~$1$.

Consider the vector space
\begin{gather}
\notag
W=\left\{\begin{pmatrix} x\\ y \end{pmatrix}
\in\mathbb{K}^n\times \mathbb{K}^n\;\left|\;
\begin{pmatrix} 
C & D\\ 
A & B	
\end{pmatrix}
\begin{pmatrix} x\\ y \end{pmatrix}=\begin{pmatrix} x\\ y \end{pmatrix}
\right\}\right..
\end{gather}
We have $\dim W\ge\max(\rank B,\,\rank C)$.

Furthermore, if $\begin{pmatrix} 
C & D\\ 
A & B	
\end{pmatrix}\neq
\begin{pmatrix} 
\mathbf{1}_n & 0\\ 0 & \mathbf{1}_n	
\end{pmatrix}$, 
then $
\det\begin{pmatrix} 
C & BD\\ CA & B	
\end{pmatrix}=0
$.
Here $\mathbf{1}_n$ is the $n\times n$ identity matrix.
\end{proposition}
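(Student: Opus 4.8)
The plan is to derive everything from relation~\eqref{bmcdab}. Set
\[
M=\begin{pmatrix} C & D\\ A & B\end{pmatrix},\qquad
N=\begin{pmatrix} C & BD\\ CA & B\end{pmatrix}\in\mat_{2n}(\mathbb{K}),
\]
so that \eqref{bmcdab} reads $MN=N$, equivalently $(M-\mathbf{1}_{2n})N=0$, where $\mathbf{1}_{2n}$ is the $2n\times 2n$ identity matrix. Reading $MN=N$ column by column: if $v\in\mathbb{K}^n\times\mathbb{K}^n$ is a column of $N$, then $Mv=v$. Hence every nonzero column of $N$ is an eigenvector of $M$ with eigenvalue~$1$, which is the first assertion. (If one prefers not to quote \eqref{bmcdab} as a black box, the identity $MN=N$ is immediate from~\eqref{lybr1} and~\eqref{lybr3}: its four blocks are $C^2+DCA=C$, $CBD+DB=BD$, $AC+BCA=CA$, $ABD+B^2=B$.)

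For the dimension bound, note that $(M-\mathbf{1}_{2n})N=0$ says exactly that the column space of $N$ lies in $W=\ker(M-\mathbf{1}_{2n})$, so $\dim W\ge\rank N$. It then remains to check $\rank N\ge\max(\rank B,\rank C)$. Deleting the last $n$ columns of $N$ cannot increase rank, so $\rank N\ge\rank\begin{pmatrix} C\\ CA\end{pmatrix}$; and the coordinate projection $\mathbb{K}^n\times\mathbb{K}^n\to\mathbb{K}^n$ onto the first factor maps the column space of $\begin{pmatrix} C\\ CA\end{pmatrix}$ onto that of $C$, so $\rank\begin{pmatrix} C\\ CA\end{pmatrix}\ge\rank C$. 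Symmetrically, deleting the first $n$ columns and projecting onto the second factor gives $\rank N\ge\rank\begin{pmatrix} BD\\ B\end{pmatrix}\ge\rank B$. Combining, $\dim W\ge\rank N\ge\max(\rank B,\rank C)$.

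Finally, for the determinant claim I would argue contrapositively: if $\det N\neq 0$, then $N$ is invertible, and multiplying $MN=N$ on the right by $N^{-1}$ gives $M=\mathbf{1}_{2n}$; so if $M\neq\mathbf{1}_{2n}$, then necessarily $\det N=0$. There is no serious obstacle in any of this — it is all elementary linear algebra once \eqref{bmcdab} is in hand; the only steps that deserve a word of care are the two rank inequalities, which rest on the standard facts that deleting columns and applying a linear projection cannot increase the rank of a matrix.
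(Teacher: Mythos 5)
Your proof is correct and follows essentially the same route as the paper's: read \eqref{bmcdab} as $MN=N$, conclude the columns of $N$ lie in $W$, bound $\dim W$ by $\rank N\ge\max(\rank B,\rank C)$, and get the determinant claim from invertibility of $N$ forcing $M=\mathbf{1}_{2n}$. The only difference is that you spell out the rank inequalities and the block verification of $MN=N$, which the paper leaves implicit.
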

\begin{proof}
The first statement follows immediately from~\eqref{bmcdab}.

Thus every column of the matrix 
$
\begin{pmatrix} 
C & BD\\ 
CA & B	
\end{pmatrix}
$ 
belongs to $W$. This implies the second statement, 
since $\rank \begin{pmatrix} 
C & BD\\ 
CA & B	
\end{pmatrix} \ge \max(\rank B,\,\rank C)$.

If the matrix $\begin{pmatrix} 
C & BD\\ CA & B	
\end{pmatrix}$ is invertible, then~\eqref{bmcdab} yields 
$\begin{pmatrix} 
C & D\\ 
A & B	
\end{pmatrix}=
\begin{pmatrix} 
\mathbf{1}_n & 0\\ 0 & \mathbf{1}_n	
\end{pmatrix}$.
Therefore, if $\begin{pmatrix} 
C & D\\ 
A & B	
\end{pmatrix}\neq
\begin{pmatrix} 
\mathbf{1}_n & 0\\ 0 & \mathbf{1}_n	
\end{pmatrix}$, 
then $
\det\begin{pmatrix} 
C & BD\\ CA & B	
\end{pmatrix}=0
$.
\end{proof}

\subsection{Linear parametric Yang--Baxter maps}
\label{slpYBm}

%In this section we generalise some results of Section~\ref{swp}
%to the case of linear parametric YB maps.

Let $V$ be a vector space over a field $\mathbb{K}$.
Let $\Omega$ be a set.
Consider a family of linear maps $Y_{a,b}\in\End(V\times V)$ 
depending on parameters $a,b\in\Omega$.
One has
\begin{gather}
\label{YmapMatr}
\begin{gathered}
Y_{a,b}(x,y)=\big(u_{a,b}(x,y),v_{a,b}(x,y)\big),\qquad x,y\in V,\\
u_{a,b}(x,y)=A_{a,b}x+B_{a,b}y,\qquad
v_{a,b}(x,y)=C_{a,b}x+D_{a,b}y,
\end{gathered}
\end{gather}
for some linear maps $A_{a,b},B_{a,b},C_{a,b},D_{a,b}\in\End(V)$ depending on $a,b$.
As said in Section~\ref{sybm}, $Y_{a,b}$ is called a parametric YB map 
if it satisfies the parametric YB equation~\eqref{YB_eq}.
%This is equivalent to the fact that the corresponding map~\er{ypsv} satisfies
%the YB equation~\eqref{ybeq}.

Let $\pyb(V)$ be the set of such linear parametric YB maps. 
An element $Y_{a,b}\in\pyb(V)$ given by~\eqref{YmapMatr} is written as 
$Y_{a,b}=(A_{a,b},D_{a,b};B_{a,b},C_{a,b})\in\pyb(V)$.

Substituting~\eqref{YmapMatr} in~\eqref{YB_eq}, 
we see that equation \eqref{YB_eq} is equivalent to the following system
\begin{subequations}
\label{analBraid}
\begin{gather}
\label{analBraid-a}
u_{a,b}(u_{a,c}(x,v_{b,c}(y,z)),u_{b,c}(y,z)) =u_{a,c}(u_{a,b}(x,y),z),\\
\label{analBraid-b}
v_{a,b}(u_{a,c}(x,v_{b,c}(y,z)),u_{b,c}(y,z))=u_{b,c}(v_{a,b}(x,y),v_{a,c}(u_{a,b}(x,y),z)),\\
\label{analBraid-c}
v_{a,c}(x,v_{b,c}(y,z))=v_{b,c}(v_{a,b}(x,y),v_{a,c}(u_{a,b}(x,y),z)),\qquad x,y,z\in V.
\end{gather}
\end{subequations}

Proposition~\ref{ABCD} below can be regarded as 
a generalisation of Proposition~\ref{pabcd} to the case of linear parametric YB maps.

\begin{proposition}
\label{ABCD}
A parametric family of maps $Y_{a,b}\in\End(V\times V)$ given by \eqref{YmapMatr}  
satisfies~\eqref{YB_eq} if and only if $A_{a,b}$, $B_{a,b}$, $C_{a,b}$, $D_{a,b}$ 
in \eqref{YmapMatr} obey the following relations for all values of the parameters $a,b,c\in\Omega$
\begin{subequations}
\begin{gather}
\label{ABC-relations-a}
C_{b, c}C_{a,b}=C_{a,c}-D_{b,c}C_{a,c}A_{a,b},\qquad 
B_{a, b}B_{b,c}=B_{a,c}-A_{a, b}B_{a,c}D_{b,c},\\
\label{ABC-relations-b}
D_{a,c}C_{b,c}-C_{b, c}D_{a,b}=D_{b, c}C_{a,c}B_{a,b},\qquad 
A_{a,c}B_{a,b}-B_{a, b}A_{b,c}=A_{a, b}B_{a,c}C_{b,c},\\
\label{ABC-relations-c}
C_{a, b}A_{a,c}-A_{b, c}C_{a,b}=B_{b, c}C_{a,c}A_{a,b},\qquad 
B_{b, c}D_{a,c}-D_{a, b}B_{b,c}=C_{a, b}B_{a,c}D_{b,c},\\
\label{ABC-relations-d}
D_{a, b}A_{b,c}-B_{b, c}C_{a,c}B_{a,b}=A_{b,c}D_{a,b}-C_{a, b}B_{a,c}C_{b,c},\\
\label{ABC-relations-e}
[A_{a, b }, A_{a, c}]=0,\qquad [D_{a, c}, D_{b,c}]=0.
\end{gather}
\end{subequations}

That is, for maps $A_{a,b}, B_{a,b}, C_{a,b}, D_{a,b}\in\End(V)$ depending on $a,b$, 
we have $(A_{a,b},D_{a,b};B_{a,b},C_{a,b})\in\pyb(V)$ if and only 
if $A_{a,b}$, $B_{a,b}$, $C_{a,b}$, $D_{a,b}$ 
obey relations \eqref{ABC-relations-a}--\eqref{ABC-relations-e}.
\end{proposition}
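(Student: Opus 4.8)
The plan is to expand the parametric YB equation~\eqref{YB_eq} directly in terms of the components, using the already-derived equivalent system~\eqref{analBraid-a}--\eqref{analBraid-c}, and then collect coefficients of the independent variables $x,y,z\in V$. Since $u_{a,b}$ and $v_{a,b}$ are linear in their vector arguments by~\eqref{YmapMatr}, each of the three equations~\eqref{analBraid-a}--\eqref{analBraid-c} becomes an identity between linear maps $V\times V\times V\to V$; because $x$, $y$, $z$ range independently over $V$, such an identity holds if and only if the coefficient of $x$, the coefficient of $y$, and the coefficient of $z$ agree separately on each side. So the single operator equation~\eqref{YB_eq} is equivalent to nine operator identities in $\End(V)$ (three equations, three variables each), and the whole content of the proposition is that this system of nine identities is exactly~\eqref{ABC-relations-a}--\eqref{ABC-relations-e}.

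Concretely, I would first substitute~\eqref{YmapMatr} into~\eqref{analBraid-a}. Working from the inside out, $v_{b,c}(y,z)=C_{b,c}y+D_{b,c}z$, then $u_{a,c}(x,v_{b,c}(y,z))=A_{a,c}x+B_{a,c}C_{b,c}y+B_{a,c}D_{b,c}z$ and $u_{b,c}(y,z)=B_{b,c}\cdot(\text{?})$— more precisely $u_{b,c}(y,z)=A_{b,c}y+B_{b,c}z$ — and finally apply $u_{a,b}$ to the pair. The right side is $u_{a,c}(A_{a,b}x+B_{a,b}y,\,z)=A_{a,c}A_{a,b}x+A_{a,c}B_{a,b}y+B_{a,c}z$. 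Matching coefficients of $x$ gives $A_{a,b}A_{a,c}=A_{a,c}A_{a,b}$, i.e.\ the first identity in~\eqref{ABC-relations-e}; matching the coefficient of $z$ gives, after rearrangement, the second identity in~\eqref{ABC-relations-a}; matching the coefficient of $y$ gives the second identity in~\eqref{ABC-relations-b}. Doing the same for~\eqref{analBraid-c} (which is essentially the ``mirror image'' of~\eqref{analBraid-a} under the symmetry $A\leftrightarrow D$, $B\leftrightarrow C$, $a\leftrightarrow c$ together with reversal of composition order) yields the second identity in~\eqref{ABC-relations-e}, the first identity in~\eqref{ABC-relations-a}, and the first identity in~\eqref{ABC-relations-c}. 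Finally, expanding~\eqref{analBraid-b}: its left side is $v_{a,b}$ applied to the same two arguments as in~\eqref{analBraid-a}, namely $C_{a,b}(A_{a,c}x+B_{a,c}C_{b,c}y+B_{a,c}D_{b,c}z)+D_{a,b}(A_{b,c}y+B_{b,c}z)$; its right side is $u_{b,c}\big(C_{a,b}x+D_{a,b}y,\;C_{a,c}(A_{a,b}x+B_{a,b}y)+D_{a,c}z\big)$, which expands to $A_{b,c}C_{a,b}x+A_{b,c}D_{a,b}y+B_{b,c}C_{a,c}A_{a,b}x+B_{b,c}C_{a,c}B_{a,b}y+B_{b,c}D_{a,c}z$. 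Comparing coefficients of $x$ gives the first identity of~\eqref{ABC-relations-c} again (a consistency check), comparing coefficients of $z$ gives the second identity of~\eqref{ABC-relations-c} again, and comparing coefficients of $y$ gives~\eqref{ABC-relations-d}; the second identity of~\eqref{ABC-relations-b} similarly reappears here. In this way all of~\eqref{ABC-relations-a}--\eqref{ABC-relations-e} are produced, and conversely each of the nine coefficient identities is one of these listed relations, so the equivalence is established. The last sentence of the proposition is then just a restatement using the $\pyb(V)$ notation.

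The main obstacle is purely organizational rather than conceptual: several coefficient comparisons across the three equations~\eqref{analBraid-a}--\eqref{analBraid-c} produce the \emph{same} operator identity (so the nine raw identities collapse to the ten listed relations, some of them appearing with multiplicity), and one must bookkeep carefully to see that nothing beyond~\eqref{ABC-relations-a}--\eqref{ABC-relations-e} arises and that every listed relation does arise. A clean way to cut the work roughly in half is to invoke the $A\leftrightarrow D$, $B\leftrightarrow C$, $a\leftrightarrow c$ symmetry that swaps~\eqref{analBraid-a} with~\eqref{analBraid-c} and fixes~\eqref{analBraid-b} up to the same swap — this is the parametric analogue of the $P$-conjugation / notation change recorded in Remark~\ref{rchnot} — so it suffices to expand~\eqref{analBraid-a} and~\eqref{analBraid-b} in full and then transport the conclusions. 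Apart from that, every step is a routine linear expansion, exactly as in the non-parametric Proposition~\ref{pabcd}, of which this is the evident generalization.
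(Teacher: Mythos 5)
Your method is exactly the paper's: substitute \eqref{YmapMatr} into \eqref{analBraid-a}--\eqref{analBraid-c} and match the coefficients of $x,y,z$, and your explicit expansion of \eqref{analBraid-a} reproduces the paper's computation \eqref{ABCD-lhs}--\eqref{ABCD-rhs} verbatim (first identity of \eqref{ABC-relations-e}, second of \eqref{ABC-relations-b}, second of \eqref{ABC-relations-a}). The paper then simply asserts that the remaining relations come from \eqref{analBraid-b}--\eqref{analBraid-c} ``similarly,'' whereas you carry out more of the expansion; your written formulas for the two sides of \eqref{analBraid-b} are correct.

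However, your bookkeeping of which listed relation each comparison produces contains errors, and as a result your tally never accounts for the first identity of \eqref{ABC-relations-b}. The correct count is: there are nine listed relations (not ten) and nine coefficient comparisons, and the correspondence is a bijection with no repetitions. Concretely, \eqref{analBraid-c} yields, from the $x$-, $y$-, $z$-coefficients respectively, the first identity of \eqref{ABC-relations-a}, the \emph{first identity of \eqref{ABC-relations-b}} (namely $D_{a,c}C_{b,c}-C_{b,c}D_{a,b}=D_{b,c}C_{a,c}B_{a,b}$, which you mislabel as the first identity of \eqref{ABC-relations-c}), and the second identity of \eqref{ABC-relations-e}. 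The first and second identities of \eqref{ABC-relations-c} then arise for the first time from the $x$- and $z$-coefficients of \eqref{analBraid-b} --- they are not ``consistency checks'' repeating earlier output --- and the second identity of \eqref{ABC-relations-b} does not reappear there. None of this affects the validity of the strategy, but the final sentence of your argument (``each of the nine coefficient identities is one of these listed relations'') needs the corrected ledger to actually certify that all nine relations in \eqref{ABC-relations-a}--\eqref{ABC-relations-e} are obtained and that nothing else arises.
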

\begin{proof}
According to the left-hand side of \eqref{analBraid-a},
\begin{multline}
\label{ABCD-lhs}
u_{a,b}(u_{a,c}(x,v_{b,c}(y,z)),u_{b,c}(y,z))=A_{a,b}u_{a,c}(x,v_{b,c}(y,z))+B_{a,b}u_{b,c}(y,z)=\\
=A_{a,b}(A_{a,c}x+B_{a,c}v_{b,c}(y,z)) +B_{a,b}(A_{b,c}y+B_{b,c}z)=\\
=A_{a,b}(A_{a,c}x+B_{a,c}(C_{b,c}y+D_{b,c}z)) +B_{a,b}(A_{b,c}y+B_{b,c}z)=\\
=A_{a,b}A_{a,c}x+(A_{a,b}B_{a,c}C_{b,c}+B_{a,b}A_{b,c})y+(A_{a,b}B_{a,c}D_{b,c}+B_{a,b}B_{b,c})z.
\end{multline}
On the other hand, from the right-hand side of \eqref{analBraid-a}, we obtain
\begin{multline}
\label{ABCD-rhs}
u_{a,c}(u_{a,b}(x,y),z)=A_{a,c}u_{a,b}(x,y)+B_{a,c}z=\\
=A_{a,c}(A_{a,b}x+B_{a,b}y)+B_{a,c}z=A_{a,c}A_{a,b}x+A_{a,c}B_{a,b}y+B_{a,c}z. 
\end{multline}
Comparing the coefficients of $x$, $y$, $z$ in the right-hand sides of equations \eqref{ABCD-lhs} and \eqref{ABCD-rhs}, 
we deduce the first relation of \eqref{ABC-relations-e}, the second relation of \eqref{ABC-relations-b}, 
and the second relation of \eqref{ABC-relations-a}. 
Similarly, one can show that the rest of relations \eqref{ABC-relations-a}-\eqref{ABC-relations-e} 
are equivalent to \eqref{analBraid-b}-\eqref{analBraid-c}.
\end{proof}

\begin{theorem}
\label{thptr}
Let $V$ be a vector space over a field $\mathbb{K}$ and $\Omega$ be a set.
Consider a linear parametric YB map $Y_{a,b}=(A_{a,b},D_{a,b};B_{a,b},C_{a,b})\in\pyb(V)$
given by~\eqref{YmapMatr} with parameters $a,b\in\Omega$.

Then $(D_{b,a},A_{b,a};C_{b,a},B_{b,a})\in\pyb(V)$. That is, 
the set $\pyb(V)$ is invariant under the transformation
\begin{gather}
\label{pdacb}
(A_{a,b},D_{a,b};B_{a,b},C_{a,b})\mapsto (D_{b,a},A_{b,a};C_{b,a},B_{b,a}).
\end{gather}
This means that, if we set 
\begin{gather}
\label{taab}
\tilde{A}_{a,b}=D_{b,a},\qquad
\tilde{B}_{a,b}=C_{b,a},\qquad
\tilde{C}_{a,b}=B_{b,a},\qquad
\tilde{D}_{a,b}=A_{b,a},
\end{gather}
the maps $\tilde{A}_{a,b}$, $\tilde{B}_{a,b}$, $\tilde{C}_{a,b}$, $\tilde{D}_{a,b}$ 
obey relations \eqref{ABC-relations-a}--\eqref{ABC-relations-e}.

Furthermore, the set $\pyb(V)$ is invariant under the transformation
\begin{gather}
\label{pladl}
\begin{gathered}
(A_{a,b},D_{a,b};B_{a,b},C_{a,b})\mapsto (LA_{a,b},D_{a,b}L^{-1};B_{a,b},C_{a,b})\\
\text{for any invertible $L\in\End(V)$ that commutes with $A_{a,b}$, $B_{a,b}$, $C_{a,b}$, $D_{a,b}$ for all $a,b\in\Omega$}.
\end{gathered}
\end{gather}

Let $V=\mathbb{K}^n$ for some $n\in\zsp$. 
The space $\End(\mathbb{K}^n)$ is identified with $\mat_n(\mathbb{K})$.
The set $\pyb(\mathbb{K}^n)$ is invariant also under the transformations
\begin{gather}
\label{pttabcd}
(A_{a,b},D_{a,b};B_{a,b},C_{a,b})\mapsto(A^\ts_{a,b},D^\ts_{a,b};C^\ts_{a,b},B^\ts_{a,b}),\\
\label{ctrabcd}
(A_{a,b},D_{a,b};B_{a,b},C_{a,b})\mapsto(D^\ts_{b,a},A^\ts_{b,a};B^\ts_{b,a},C^\ts_{b,a}),
\end{gather}
where $\ts$ denotes the transpose operation.
\end{theorem}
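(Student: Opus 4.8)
The plan is to prove the four transformations \eqref{pdacb}, \eqref{pladl}, \eqref{pttabcd}, \eqref{ctrabcd} in turn, obtaining the last one from the first two. For \eqref{pdacb} the point is that ``flip the two tensor factors and simultaneously swap the two parameters'' is a symmetry of the parametric YB equation. Let $P\cl V\times V\to V\times V$, $P(x,y)=(y,x)$, let $S\cl V\times V\times V\to V\times V\times V$ be the order-reversing permutation $S(x,y,z)=(z,y,x)$ (so $S=S^{-1}$), and put $\tilde Y_{a,b}=P\circ Y_{b,a}\circ P$. A direct computation with \eqref{YmapMatr} shows that $\tilde Y_{a,b}=(D_{b,a},A_{b,a};C_{b,a},B_{b,a})$, i.e. its components are exactly the $\tilde A_{a,b},\tilde B_{a,b},\tilde C_{a,b},\tilde D_{a,b}$ of \eqref{taab}. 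Using \eqref{y1223}--\eqref{y13} one checks the conjugation identities $S\circ Y^{12}_{a,b}\circ S=\tilde Y^{23}_{b,a}$, $S\circ Y^{23}_{a,b}\circ S=\tilde Y^{12}_{b,a}$ and $S\circ Y^{13}_{a,b}\circ S=\tilde Y^{13}_{b,a}$ for all $a,b\in\Omega$. Applying $S\circ(-)\circ S$ to both sides of \eqref{YB_eq}, inserting $S\circ S=\id$ between consecutive factors and using these identities turns \eqref{YB_eq} into the same equation for $\tilde Y$ with parameters relabelled by $(a,b,c)\mapsto(c,b,a)$; since $(a,b,c)$ runs over all of $\Omega^3$, this is exactly \eqref{YB_eq} for $\tilde Y$, so $\tilde Y_{a,b}\in\pyb(V)$. (Alternatively, one can substitute \eqref{taab} directly into \eqref{ABC-relations-a}--\eqref{ABC-relations-e} and verify, after the relabelling $(a,b,c)\mapsto(c,b,a)$, that these relations for $\tilde A,\tilde B,\tilde C,\tilde D$ reduce to \eqref{ABC-relations-a}--\eqref{ABC-relations-e} for $A,B,C,D$, with the two relations inside each of \eqref{ABC-relations-a}, \eqref{ABC-relations-b}, \eqref{ABC-relations-c} interchanged.)

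For \eqref{pladl} I would verify the defining relations directly via Proposition~\ref{ABCD}. Set $A'_{a,b}=LA_{a,b}$, $D'_{a,b}=D_{a,b}L^{-1}$, $B'_{a,b}=B_{a,b}$, $C'_{a,b}=C_{a,b}$. Because $L$, hence $L^{-1}$, commutes with every $A_{a,b}$, $B_{a,b}$, $C_{a,b}$, $D_{a,b}$, all factors $L^{\pm1}$ can be moved freely through products; then each of \eqref{ABC-relations-a}--\eqref{ABC-relations-d} reduces to the corresponding relation for $A,B,C,D$, either verbatim or after cancelling (using invertibility of $L$) a common left or right factor $L^{\pm1}$ from both sides, while \eqref{ABC-relations-e} becomes $L^2[A_{a,b},A_{a,c}]=0$ together with $L^{-2}[D_{a,c},D_{b,c}]=0$. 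All of these hold because $A,B,C,D$ obey \eqref{ABC-relations-a}--\eqref{ABC-relations-e}, so $(LA_{a,b},D_{a,b}L^{-1};B_{a,b},C_{a,b})\in\pyb(V)$.

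For \eqref{pttabcd} and \eqref{ctrabcd}, take $V=\mathbb{K}^n$ and identify $\End(\mathbb{K}^n)=\mat_n(\mathbb{K})$. Regard $Y_{a,b}$ as the element of $\mat_{2n}(\mathbb{K})$ whose $n\times n$ blocks are $A_{a,b}$, $B_{a,b}$, $C_{a,b}$, $D_{a,b}$, arranged so that $Y_{a,b}(x,y)=(A_{a,b}x+B_{a,b}y,\,C_{a,b}x+D_{a,b}y)$. Its transpose $Y^\ts_{a,b}\in\mat_{2n}(\mathbb{K})$ is then the matrix of the map with components $(A^\ts_{a,b},D^\ts_{a,b};C^\ts_{a,b},B^\ts_{a,b})$. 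Since $(Y^{ij}_{a,b})^\ts=(Y^\ts_{a,b})^{ij}$ for $ij\in\{12,13,23\}$ and transposition reverses the order of a composition, transposing both sides of \eqref{YB_eq} gives $(Y^{23}_{b,c})^\ts\circ(Y^{13}_{a,c})^\ts\circ(Y^{12}_{a,b})^\ts=(Y^{12}_{a,b})^\ts\circ(Y^{13}_{a,c})^\ts\circ(Y^{23}_{b,c})^\ts$, which is \eqref{YB_eq} for $Y^\ts_{a,b}$ (the two sides are merely interchanged), with no parameter relabelling. Hence $(A^\ts_{a,b},D^\ts_{a,b};C^\ts_{a,b},B^\ts_{a,b})\in\pyb(\mathbb{K}^n)$, which is \eqref{pttabcd}. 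Finally, \eqref{ctrabcd} is the composition of \eqref{pdacb} followed by \eqref{pttabcd}: applying \eqref{pttabcd} to $(D_{b,a},A_{b,a};C_{b,a},B_{b,a})$ produces $(D^\ts_{b,a},A^\ts_{b,a};B^\ts_{b,a},C^\ts_{b,a})$, and a composition of two $\pyb$-preserving transformations is $\pyb$-preserving.

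I do not expect any deep difficulty. The most error-prone step will be keeping track of the simultaneous block swap $B\leftrightarrow C$ and parameter swap $a\leftrightarrow b$ in \eqref{pdacb} (both in the conjugation identities and in the relabelling of the YB equation). Conceptually, the whole argument rests on spotting two symmetries of \eqref{YB_eq} — invariance under conjugation by the order-reversing permutation combined with $(a,b,c)\mapsto(c,b,a)$, and invariance under matrix transposition — after which the verifications for \eqref{pdacb}, \eqref{pttabcd} and \eqref{ctrabcd} become routine, and \eqref{pladl} is a short direct computation using Proposition~\ref{ABCD}.
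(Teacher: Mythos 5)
Your proof is correct, but for the two ``global'' transformations it takes a genuinely different route from the paper. The paper proves \eqref{pdacb} by working at the level of the component relations of Proposition~\ref{ABCD}: it substitutes \eqref{taab} into \eqref{ABC-relations-a}, obtains the same relations with the roles of the two identities in each pair interchanged, and then relabels $a\mapsto c$, $c\mapsto a$ (the route you mention only parenthetically as an alternative); the remaining transformations are dispatched with ``proved similarly,'' plus the observation that \eqref{ctrabcd} is the composition of \eqref{pttabcd} and \eqref{pdacb}, which you also use. Your primary argument instead works directly with the parametric YB equation \eqref{YB_eq}: conjugation by the order-reversing permutation $S(x,y,z)=(z,y,x)$ together with the relabelling $(a,b,c)\mapsto(c,b,a)$ for \eqref{pdacb}, and transposition of the $2n\times 2n$ block matrix (which reverses composition and merely swaps the two sides of \eqref{YB_eq}) for \eqref{pttabcd}. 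I checked the conjugation identities $S\circ Y^{12}_{a,b}\circ S=\tilde Y^{23}_{b,a}$, $S\circ Y^{23}_{a,b}\circ S=\tilde Y^{12}_{b,a}$, $S\circ Y^{13}_{a,b}\circ S=\tilde Y^{13}_{b,a}$ with $\tilde Y_{a,b}=P\circ Y_{b,a}\circ P=(D_{b,a},A_{b,a};C_{b,a},B_{b,a})$, and they hold, as does your reduction of \eqref{ABC-relations-a}--\eqref{ABC-relations-e} under \eqref{pladl}. Your approach buys a conceptual explanation (two symmetries of the YB equation itself) and avoids checking all of \eqref{ABC-relations-a}--\eqref{ABC-relations-e} one by one; the paper's approach stays entirely inside the algebraic framework of Proposition~\ref{ABCD} and makes explicit which relation is carried to which, which is what the later statements (e.g.\ Theorem~\ref{thme}) actually exploit.
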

\begin{proof}
%All these statements are straightforwardly deduced from Proposition~\ref{ABCD}.
According to Proposition~\ref{ABCD}, for $(A_{a,b},D_{a,b};B_{a,b},C_{a,b})\in\pyb(V)$
we have~\eqref{ABC-relations-a}--\eqref{ABC-relations-e}. 
To prove that $\pyb(V)$ is invariant under the transformation~\er{pdacb},
we need to show 
$(\tilde{A}_{a,b},\tilde{D}_{a,b};\tilde{B}_{a,b},\tilde{C}_{a,b})\in\pyb(V)$ for the maps~\er{taab}.

Since \eqref{ABC-relations-a} is valid for $(A_{a,b},D_{a,b};B_{a,b},C_{a,b})\in\pyb(V)$, 
using~\eqref{ABC-relations-a},~\er{taab}, we obtain
\begin{gather}
\lb{hfrab}
\tilde{B}_{c,b}\tilde{B}_{b,a}=\tilde{B}_{c,a}-\tilde{A}_{c,b}\tilde{B}_{c,a}\tilde{D}_{b,a},\qquad 
\tilde{C}_{b,a}\tilde{C}_{c,b}=\tilde{C}_{c,a}-\tilde{D}_{b,a}\tilde{C}_{c,a}\tilde{A}_{c,b}.
\end{gather}
Since equations~\eqref{hfrab} are valid for all values of $a,b,c$,
we can make the change $a\mapsto c,\ c\mapsto a$ in~\eqref{hfrab} and get 
\begin{gather}
\label{cdfrab}
\tilde{B}_{a,b}\tilde{B}_{b,c}=\tilde{B}_{a,c}-\tilde{A}_{a,b}\tilde{B}_{a,c}\tilde{D}_{b,c},\qquad 
\tilde{C}_{b,c}\tilde{C}_{a,b}=\tilde{C}_{a,c}-\tilde{D}_{b,c}\tilde{C}_{a,c}\tilde{A}_{a,b}.
\end{gather}
Equations~\eqref{cdfrab} say that 
$\tilde{A}_{a,b}$, $\tilde{B}_{a,b}$, $\tilde{C}_{a,b}$, $\tilde{D}_{a,b}$ 
satisfy relations~\eqref{ABC-relations-a}. 
In much the same way, 
one can show that $\tilde{A}_{a,b}$, $\tilde{B}_{a,b}$, $\tilde{C}_{a,b}$, $\tilde{D}_{a,b}$
given by~\eqref{taab} satisfy all relations~\eqref{ABC-relations-a}--\eqref{ABC-relations-e}, i.e., 
$(\tilde{A}_{a,b},\tilde{D}_{a,b};\tilde{B}_{a,b},\tilde{C}_{a,b})\in\pyb(V)$.

The other statements of the theorem are proved similarly.
Note that the transformation~\er{ctrabcd} is equal to 
the composition of the transformations~\er{pttabcd},~\er{pdacb}.
\end{proof}
\begin{corollary}
\lb{clc}
Let $(A_{a,b},D_{a,b};B_{a,b},C_{a,b})\in\pyb(V)$.
Then for any nonzero $l\in\mathbb{K}$ one has
\begin{gather}
\lb{lald}
(lA_{a,b},l^{-1}D_{a,b};B_{a,b},C_{a,b})\in\pyb(V).
\end{gather}
Thus we obtain a family of linear parametric YB maps~\er{lald} depending on~$l$.
\end{corollary}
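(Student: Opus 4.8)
The plan is to obtain Corollary~\ref{clc} as an immediate special case of the transformation~\eqref{pladl} established in Theorem~\ref{thptr}. First I would choose $L=l\cdot\id\in\End(V)$, where $\id\cl V\to V$ is the identity map and $l\in\mathbb{K}\setminus\{0\}$. Because $l\neq 0$, this $L$ is invertible with $L^{-1}=l^{-1}\cdot\id$; and, being a scalar multiple of the identity, it commutes with every element of $\End(V)$, in particular with $A_{a,b}$, $B_{a,b}$, $C_{a,b}$, $D_{a,b}$ for all $a,b\in\Omega$. Hence the hypotheses required for~\eqref{pladl} are satisfied by this $L$.

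Next I would simply substitute this $L$ into~\eqref{pladl}. One has $LA_{a,b}=lA_{a,b}$ and $D_{a,b}L^{-1}=l^{-1}D_{a,b}$, while $B_{a,b}$ and $C_{a,b}$ are left unchanged, so Theorem~\ref{thptr} gives $(lA_{a,b},\,l^{-1}D_{a,b};\,B_{a,b},\,C_{a,b})\in\pyb(V)$, which is precisely the assertion~\eqref{lald}. Letting $l$ range over $\mathbb{K}\setminus\{0\}$ then yields the announced family of linear parametric YB maps parametrised by~$l$.

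Since this is a one-line specialisation of a result already proved, I do not expect any genuine obstacle; the only point to check is the (trivial) fact that the scalar operator $l\cdot\id$ commutes with the maps $A_{a,b},B_{a,b},C_{a,b},D_{a,b}$, which is automatic. If a self-contained argument were preferred instead, one could substitute $\tilde{A}_{a,b}=lA_{a,b}$, $\tilde{D}_{a,b}=l^{-1}D_{a,b}$, $\tilde{B}_{a,b}=B_{a,b}$, $\tilde{C}_{a,b}=C_{a,b}$ directly into relations~\eqref{ABC-relations-a}--\eqref{ABC-relations-e} and observe that each of those relations is homogeneous with respect to the weight assignment $\deg A=1$, $\deg D=-1$, $\deg B=\deg C=0$: every monomial on both sides of a given relation carries the same net power of $l$, so these powers cancel and each relation reduces to its original form. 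Invoking~\eqref{pladl} is, however, the cleaner route.
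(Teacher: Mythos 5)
Your proof is correct and follows exactly the paper's own argument: the paper likewise takes $L=l\cdot\id$ and applies the transformation~\eqref{pladl} from Theorem~\ref{thptr}. The additional remark on the homogeneity of relations~\eqref{ABC-relations-a}--\eqref{ABC-relations-e} is a nice sanity check but not needed.
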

\begin{proof}
Consider the identity map $\id\colon V\to V$.
Using~\er{pladl} for $L=l\cdot\id\in\End(V)$, one gets~\er{lald}.
\end{proof}

Since the values of the parameters $a,b,c\in\Omega$ are arbitrary, 
we are allowed to make any permutation of $a,b,c$ in equations~\eqref{ABC-relations-a}-\eqref{ABC-relations-e}.
Making the permutation $a\mapsto c$, $b\mapsto a$, $c\mapsto b$ 
in the first equation from~\eqref{ABC-relations-a} and in the first equation from~\eqref{ABC-relations-c}, we obtain
\begin{gather}
\label{cabcca}
C_{a,b}C_{c,a}=C_{c,b}-D_{a,b}C_{c,b}A_{c,a},\\
\label{ccaacb}
C_{c,a}A_{c,b}-A_{a,b}C_{c,a}=B_{a,b}C_{c,b}A_{c,a}.
\end{gather}

\begin{theorem}
\label{thme}
For any $Y_{a,b}=(A_{a,b},D_{a,b};B_{a,b},C_{a,b})\in\pyb(V)$ given by~\eqref{YmapMatr}, we have
\begin{gather}
\label{matrcdab}
\begin{pmatrix} 
C_{a,b} & D_{a,b}\\ A_{a,b} & B_{a,b}	
\end{pmatrix}
\begin{pmatrix} 
C_{c,a} & B_{a,c}D_{b,c}\\ C_{c,b}A_{c,a} & B_{b,c}	
\end{pmatrix}=
\begin{pmatrix} 
C_{c,b} & B_{b,c}D_{a,c}\\ C_{c,a}A_{c,b} & B_{a,c}	
\end{pmatrix},\\
\label{pmcbd}
\begin{pmatrix} 
C_{b,c} & D_{b,c}C_{a,c}\\ 
A_{c,a}B_{c,b} & B_{c,a}
\end{pmatrix}
\begin{pmatrix} 
C_{a,b} & D_{a,b}\\ A_{a,b} & B_{a,b}	
\end{pmatrix}
=\begin{pmatrix} 
C_{a,c} & D_{a,c}C_{b,c}\\ 
A_{c,b}B_{c,a} & B_{c,b}
\end{pmatrix}.
\end{gather}
Consider the map $P\in\End(V\times V)$, $P(x,y)=(y,x)$, and the maps
\begin{gather*}
H_{a,b,c}\in\End(V\times V),\qqquad H_{a,b,c}(x,y)=
(C_{c,a}x+B_{a,c}D_{b,c}y,\,C_{c,b}A_{c,a}x+B_{b,c}y),\qqquad x,y\in V,\\
\tilde{H}_{a,b,c}\in\End(V\times V),\qqquad\tilde{H}_{a,b,c}(x,y)=
(C_{b,c}x+D_{b,c}C_{a,c}y,\,A_{c,a}B_{c,b}x+B_{c,a}y),\qqquad x,y\in V,
\end{gather*}
depending on parameters $a,b,c\in\Omega$. Equations~\eqref{matrcdab},~\eqref{pmcbd} say that 
\begin{gather}
\label{ppyh}
PY_{a,b}H_{a,b,c}=H_{b,a,c},\\
\label{phpy}
\tilde{H}_{a,b,c}PY_{a,b}=\tilde{H}_{b,a,c}.
\end{gather}

\end{theorem}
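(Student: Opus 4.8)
The plan is to prove the two matrix identities \eqref{matrcdab} and \eqref{pmcbd} by a direct block-by-block computation of the products on their left-hand sides, and then to observe that \eqref{ppyh} and \eqref{phpy} are nothing but these identities transcribed in terms of the maps $P$, $Y_{a,b}$, $H_{a,b,c}$, $\tilde{H}_{a,b,c}$.

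First I would recall the elementary dictionary between linear maps on $V\times V\cong V\oplus V$ and $2\times2$ block matrices with entries in $\End(V)$: a map $(x,y)\mapsto(Px+Qy,\,Rx+Sy)$ is represented by the block matrix with rows $(P,Q)$ and $(R,S)$, and composition of such maps corresponds to the product of the associated block matrices. Under this dictionary the matrix of $Y_{a,b}$ has rows $(A_{a,b},B_{a,b})$, $(C_{a,b},D_{a,b})$ and the matrix of $P$ has rows $(0,\id)$, $(\id,0)$, so the matrix of $P\circ Y_{a,b}$ is exactly the first factor occurring in \eqref{matrcdab}. Likewise the matrices of $H_{a,b,c}$ and $\tilde{H}_{a,b,c}$ are the remaining block matrices appearing in \eqref{matrcdab} and \eqref{pmcbd}, while the matrices of $H_{b,a,c}$ and $\tilde{H}_{b,a,c}$ are obtained from these by interchanging $a$ and $b$, which reproduces precisely the right-hand sides of \eqref{matrcdab} and \eqref{pmcbd}. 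Hence \eqref{matrcdab} is equivalent to \eqref{ppyh}, \eqref{pmcbd} is equivalent to \eqref{phpy}, and it suffices to establish the two matrix identities.

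To prove \eqref{matrcdab} I would multiply out the left-hand side and compare each of the four blocks with the corresponding block on the right. The $(1,1)$ entry reads $C_{a,b}C_{c,a}+D_{a,b}C_{c,b}A_{c,a}=C_{c,b}$, which is exactly~\eqref{cabcca}, and the $(2,1)$ entry reads $A_{a,b}C_{c,a}+B_{a,b}C_{c,b}A_{c,a}=C_{c,a}A_{c,b}$, which is exactly~\eqref{ccaacb}; both were obtained earlier from the first relations of~\eqref{ABC-relations-a} and~\eqref{ABC-relations-c} via the substitution $a\mapsto c,\ b\mapsto a,\ c\mapsto b$. The $(1,2)$ entry reads $C_{a,b}B_{a,c}D_{b,c}+D_{a,b}B_{b,c}=B_{b,c}D_{a,c}$, which is the second relation of~\eqref{ABC-relations-c}, and the $(2,2)$ entry reads $A_{a,b}B_{a,c}D_{b,c}+B_{a,b}B_{b,c}=B_{a,c}$, which is the second relation of~\eqref{ABC-relations-a}. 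For \eqref{pmcbd} the computation is the same in spirit: expanding the left-hand side, the $(1,1)$ and $(1,2)$ blocks yield the first relations of~\eqref{ABC-relations-a} and~\eqref{ABC-relations-b}, whereas the $(2,1)$ and $(2,2)$ blocks yield the second relations of~\eqref{ABC-relations-b} and~\eqref{ABC-relations-a} after the relabelling $a\mapsto c,\ b\mapsto a,\ c\mapsto b$, which is legitimate since $a,b,c\in\Omega$ are arbitrary. This completes the proof.

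The argument is essentially bookkeeping, so I do not expect a genuine obstacle; the one point demanding care is keeping track of which relabelling of $a,b,c$ converts a relation of Proposition~\ref{ABCD} into the identity satisfied by a given block — in particular, the lower row of \eqref{pmcbd} needs the substitution $a\mapsto c,\ b\mapsto a,\ c\mapsto b$ rather than the relations in their stated form. I would also observe in passing that only relations~\eqref{ABC-relations-a}--\eqref{ABC-relations-c} (together with these relabellings) are used, while relations~\eqref{ABC-relations-d} and~\eqref{ABC-relations-e} do not enter.
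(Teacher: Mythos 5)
Your proposal is correct and follows essentially the same route as the paper: the paper's proof likewise derives \eqref{matrcdab} from \eqref{ABC-relations-a}, \eqref{ABC-relations-c}, \eqref{cabcca}, \eqref{ccaacb}, treats \eqref{pmcbd} "similarly", and notes that the matrix identities are just \eqref{ppyh}, \eqref{phpy} rewritten. You have merely spelled out the block-by-block bookkeeping (including the relabelling $a\mapsto c$, $b\mapsto a$, $c\mapsto b$ needed for the lower row of \eqref{pmcbd}) that the paper leaves implicit, and all of your block identifications check out.
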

\begin{proof}
Equation~\eqref{matrcdab} follows from~\eqref{ABC-relations-a}, 
\eqref{ABC-relations-c}, \eqref{cabcca}, \eqref{ccaacb}.
Equation~\eqref{pmcbd} is proved similarly.

Clearly, equations~\eqref{matrcdab},~\eqref{pmcbd} are equivalent to~\eqref{ppyh},~\eqref{phpy}.
\end{proof}

\begin{remark}
\label{rpequiv}
Equations~\eqref{matrcdab},~\eqref{pmcbd} are equivalent 
to~\eqref{ABC-relations-a},~\eqref{ABC-relations-b},~\eqref{ABC-relations-c},
up to permutations of $a,b,c$.
Thus, the rather cumbersome equations~\eqref{ABC-relations-a},~\eqref{ABC-relations-b},~\eqref{ABC-relations-c} 
can be replaced by equations~\eqref{matrcdab},~\eqref{pmcbd}, 
which have more clear structure, since they are of the form~\eqref{ppyh},~\eqref{phpy}.
\end{remark}

\begin{example}
\lb{exdmh}
Let $\mathbb{K}=\mathbb{C}$, $V=\mathbb{C}$, and $\Omega=\mathbb{C}^2$.
In~\cite{dim-mull} one can find the following linear parametric YB map
$Y_{a,b}\colon\mathbb{C}\times\mathbb{C}\to\mathbb{C}\times\mathbb{C}$
with $a=(a_1,a_2)\in\mathbb{C}^2$ and $b=(b_1,b_2)\in\mathbb{C}^2$
\begin{gather}
\lb{mhm}
Y_{a,b}\begin{pmatrix}x\\ y\end{pmatrix}=
\begin{pmatrix}
\frac{a_1-b_1}{a_1-b_2} & \frac{b_1-b_2}{a_1-b_2} \\[0.4em]
\frac{a_1-a_2}{a_1-b_2} & \frac{a_2-b_2}{a_1-b_2}
 \end{pmatrix}
\begin{pmatrix}x\\ y\end{pmatrix}, \qquad x,y\in\mathbb{C},\qquad a=(a_1,a_2),\quad
b=(b_1,b_2).
\end{gather}
In~\cite{dim-mull} the parameters $(a_1,a_2)$, $(b_1,b_2)$ are denoted by $(p_1,q_1)$, $(p_2,q_2)$.
Note that Remark~\ref{rrdp} is applicable to this map.

Let $l\in\mathbb{C}$, $l\neq 0$. Applying Corollary~\ref{clc} to 
$Y_{a,b}=
\big(\frac{a_1-b_1}{a_1-b_2},\frac{a_2-b_2}{a_1-b_2};\frac{b_1-b_2}{a_1-b_2},\frac{a_1-a_2}{a_1-b_2}\big)
\in\pyb(\mathbb{C})$, 
we obtain the linear parametric YB map
\begin{gather}
\lb{ylab}
Y^l_{a,b}\colon\mathbb{C}\times\mathbb{C}\to\mathbb{C}\times\mathbb{C},\qquad
Y^l_{a,b}\begin{pmatrix}x\\ y\end{pmatrix}=
\begin{pmatrix}
\frac{l(a_1-b_1)}{a_1-b_2} & \frac{b_1-b_2}{a_1-b_2} \\[0.4em]
\frac{a_1-a_2}{a_1-b_2} & \frac{a_2-b_2}{l(a_1-b_2)}
 \end{pmatrix}
\begin{pmatrix}x\\ y\end{pmatrix},\qquad
a=(a_1,a_2),\quad b=(b_1,b_2).
\end{gather}
For $l\neq 1$ the map~\er{ylab} is new.
\end{example}

\begin{remark}
\label{rreduce}
Theorem~\ref{thptr} can be regarded as 
a generalisation of Proposition~\ref{thtr} and Remark~\ref{rctr}
to the case of linear parametric YB maps.
However, if $Y_{a,b}=(A_{a,b},D_{a,b};B_{a,b},C_{a,b})\in\pyb(V)$ given by~\eqref{YmapMatr}
does not depend on the parameters $a,b\in\Omega$, then 
the requirements on $L\in\End(V)$ in~\er{ladl} are less strict than in~\er{pladl}, 
because in~\er{ladl} the map $L$ is not required to commute with $A$ and $D$.
(In~\er{ladl} the map $L$ commutes with $B$, $C$, and the product $AD$.)

Note also that, if $Y_{a,b}=(A_{a,b},D_{a,b};B_{a,b},C_{a,b})\in\pyb(V)$ 
does not depend on $a,b$, then equation~\er{matrcdab} reduces to~\er{bmcdab}.
\end{remark}

\section{Yang--Baxter maps from refactorisation problems}
\label{sybmr}

In this section, after recalling the well-known relation between matrix refactorisation problems 
and YB maps (see, e.g., \cite{Veselov,Veselov2,Kouloukas}),
in Proposition~\ref{pasa} we present a straightforward generalisation of this relation
to the case of refactorisation problems in arbitrary associative algebras.
Using Proposition~\ref{pasa}, we prove Proposition~\ref{paeps}, 
whose applications are discussed in Section~\ref{NLS-type-maps}.
Namely, Proposition~\ref{paeps} explains why the linear parametric maps 
arising from the linear approximations of the matrix refactorisation problems
considered in Section~\ref{NLS-type-maps} satisfy the parametric YB equation~\eqref{YB_eq}
(see Remarks~\ref{reps1},~\ref{reps2},~\ref{reps3}).

Let $V$, $\Omega$, $\Lambda$ be sets and $n\in\zsp$.
Let $L(x;a,\lambda)$ be an  $n\times n$ matrix depending 
on $x\in V$, $a\in\Omega$, $\lambda\in\Lambda$. 
Here $a$, $\lambda$ are regarded as parameters, 
and $\lambda$ is called a spectral parameter.
To simplify notation, we set $L_a(x)=L(x;a,\lambda)$, 
so $\lambda$ is not written explicitly in this notation.

Consider a family of maps 
\begin{gather}
\label{yabvv}
Y_{a,b}\colon V\times V\to V\times V,\qqquad
Y_{a,b}(x,y)=\big(u_{a,b}(x,y),\,v_{a,b}(x,y)\big),\qquad x,y\in V,
\end{gather}
depending on parameters $a,b\in\Omega$.
Suppose that $u=u_{a,b}(x,y)$ and $v=v_{a,b}(x,y)$ obey the equation
\begin{gather}
\label{eq-Lax}
L_a(u)L_b(v)=L_b(y)L_a(x)
\end{gather}
for all values of $x,y,a,b,\lambda$. 
Then, following~\cite{Veselov2}, we say that $L_a(x)=L(x;a,\lambda)$ 
is a \emph{Lax matrix} for the parametric map~\er{yabvv}.
Equation~\er{eq-Lax} is called the \emph{matrix refactorisation problem} 
corresponding to~$L_a(x)$.

Suppose that the equation
\begin{gather}
\label{trifac}
L_a(\hat{x})L_b(\hat{y})L_c(\hat{z})=L_a(x)L_b(y)L_c(z)\qquad\text{for all }\,a,b,c\in\Omega
\end{gather}
implies $\hat{x}=x$, $\hat{y}=y$, $\hat{z}=z$.
Then, if $L_a(x)$ is a Lax matrix for a parametric map~\er{yabvv},
this map satisfies the parametric YB equation \eqref{YB_eq} (see \cite{Veselov, Kouloukas}).

We need the following generalisation, where matrices are replaced by elements of an associative algebra.

\begin{proposition} 
\label{pasa}
Let $V$, $\Omega$ be sets and $\mathfrak{A}$ be an associative algebra.
Consider maps
\begin{gather}
\notag
Q_a\cl V\to\mathfrak{A},\qquad
u_{a,b}\cl V\times V\to V,\qquad
v_{a,b}\cl V\times V\to V
\end{gather}
depending on parameters $a,b\in\Omega$.
Suppose that
\begin{itemize}
\item  
the equation
\begin{gather}
\lb{qqq}
Q_a(\hat{x})Q_b(\hat{y})Q_c(\hat{z})=Q_a(x)Q_b(y)Q_c(z)\qquad\text{for all }\,a,b,c\in\Omega
\end{gather}
implies $\hat{x}=x$, $\hat{y}=y$, $\hat{z}=z$,
\item we have
\begin{gather}
\lb{quabx}
Q_a\big(u_{a,b}(x,y)\big)Q_b\big(v_{a,b}(x,y)\big)=Q_b(y)Q_a(x)
\qquad\text{for all }\,a,b\in\Omega,\,\ 
x,y\in V.
\end{gather}
\end{itemize}
Then the parametric map~\er{yabvv} satisfies the parametric YB equation \eqref{YB_eq}.
\end{proposition}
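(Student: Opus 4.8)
The plan is to reuse, almost verbatim, the classical argument showing that a Lax matrix produces a parametric YB map (see~\cite{Veselov,Kouloukas}); the only property of matrix multiplication exploited there is associativity, which holds in any associative algebra~$\mathfrak{A}$. Fix arbitrary $x,y,z\in V$ and $a,b,c\in\Omega$, and apply both sides of the parametric YB equation~\eqref{YB_eq} to the triple $(x,y,z)$. Denote by $(\hat x,\hat y,\hat z)$ the output of $Y^{12}_{a,b}\circ Y^{13}_{a,c}\circ Y^{23}_{b,c}$ and by $(\tilde x,\tilde y,\tilde z)$ the output of $Y^{23}_{b,c}\circ Y^{13}_{a,c}\circ Y^{12}_{a,b}$, where the $Y^{ij}$ act as in~\eqref{y1223},~\eqref{y13} with the appropriate parameters attached. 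Since $a,b,c$ were arbitrary, establishing~\eqref{YB_eq} amounts to showing $\hat x=\tilde x$, $\hat y=\tilde y$, $\hat z=\tilde z$, and by the uniqueness hypothesis~\eqref{qqq} (applied with $x,y,z$ replaced by $\tilde x,\tilde y,\tilde z$) it is enough to prove
\[
Q_a(\hat x)\,Q_b(\hat y)\,Q_c(\hat z)=Q_a(\tilde x)\,Q_b(\tilde y)\,Q_c(\tilde z)\qquad\text{for all }a,b,c\in\Omega .
\]

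First I would follow the left-hand composition, naming the intermediate vectors. Applying $Y^{23}_{b,c}$ sends $(x,y,z)$ to $(x,y',z')$ with $y'=u_{b,c}(y,z)$, $z'=v_{b,c}(y,z)$; then $Y^{13}_{a,c}$ sends this to $(x',y',z'')$ with $x'=u_{a,c}(x,z')$, $z''=v_{a,c}(x,z')$; finally $Y^{12}_{a,b}$ sends this to $(\hat x,\hat y,z'')$ with $\hat x=u_{a,b}(x',y')$, $\hat y=v_{a,b}(x',y')$, so $\hat z=z''$. The refactorisation identity~\eqref{quabx}, used at these three steps, gives $Q_b(y')Q_c(z')=Q_c(z)Q_b(y)$, then $Q_a(x')Q_c(z'')=Q_c(z')Q_a(x)$, then $Q_a(\hat x)Q_b(\hat y)=Q_b(y')Q_a(x')$. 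Substituting these in turn, using only associativity in $\mathfrak{A}$,
\[
Q_a(\hat x)Q_b(\hat y)Q_c(\hat z)=Q_b(y')Q_a(x')Q_c(z'')=Q_b(y')Q_c(z')Q_a(x)=Q_c(z)Q_b(y)Q_a(x).
\]

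An entirely parallel trace of the right-hand composition — applying $Y^{12}_{a,b}$, then $Y^{13}_{a,c}$, then $Y^{23}_{b,c}$, and invoking~\eqref{quabx} three more times — produces $x_1=u_{a,b}(x,y)$, $y_1=v_{a,b}(x,y)$, $x_2=u_{a,c}(x_1,z)$, $z_1=v_{a,c}(x_1,z)$, $\tilde y=u_{b,c}(y_1,z_1)$, $\tilde z=v_{b,c}(y_1,z_1)$, $\tilde x=x_2$, and the analogous resubstitution yields $Q_a(\tilde x)Q_b(\tilde y)Q_c(\tilde z)=Q_c(z)Q_b(y)Q_a(x)$. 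Thus both triple products equal $Q_c(z)Q_b(y)Q_a(x)$ for all $a,b,c$, so~\eqref{qqq} forces $(\hat x,\hat y,\hat z)=(\tilde x,\tilde y,\tilde z)$, which is~\eqref{YB_eq}. I expect no genuine obstacle here — the proof is purely formal — but two points need care: keeping track of the six auxiliary vectors, and the fact that~\eqref{quabx} reverses the order of its two factors, so each reassociation must be performed inside the noncommutative algebra $\mathfrak{A}$ without ever transposing factors; this is precisely why the matrix argument carries over unchanged.
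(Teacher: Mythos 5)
Your proof is correct and is essentially the paper's own argument: the paper's proof simply cites the matrix-case proof from the Kouloukas--Papageorgiou reference and observes that it uses only associativity of the algebra, which is exactly the three-step substitution computation you have written out in full for both sides of \eqref{YB_eq}. The only point worth noting is that your concluding appeal to \eqref{qqq} uses the equality of the two triple products for the single fixed triple $(a,b,c)$ under consideration, which is how that uniqueness hypothesis is read and applied both in the paper and in the cited matrix-case proof.
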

\begin{proof} 
In the case when $\mathfrak{A}$ is the algebra of $n\times n$ matrices 
depending on a spectral parameter, 
a proof of this statement is presented in~\cite{Kouloukas}.
The same proof works for arbitrary associative algebras.
\end{proof}
\begin{proposition} 
\label{paeps}
Let $V$, $\Omega$ be sets and $\mathfrak{B}$ be an associative algebra.
Consider maps
\begin{gather}
\notag
T_a\cl V\to\mathfrak{B},\qquad
u_{a,b}\cl V\times V\to V,\qquad
v_{a,b}\cl V\times V\to V
\end{gather}
depending on parameters $a,b\in\Omega$.
For each $a\in\Omega$, let $S_a\in\mathfrak{B}$ such that
\begin{gather}
\lb{sasb}
S_aS_b=S_bS_a\qquad\text{for all }\,a,b\in\Omega.
\end{gather}
Suppose that
\begin{itemize}
\item  
the equation
\begin{gather}
\lb{tss}
T_a(\hat{x})S_bS_c+S_aT_b(\hat{y})S_c+S_aS_bT_c(\hat{z})
=T_a(x)S_bS_c+S_aT_b(y)S_c+S_aS_bT_c(z)\qquad\text{for all }\,a,b,c\in\Omega
\end{gather}
implies $\hat{x}=x$, $\hat{y}=y$, $\hat{z}=z$,
\item we have
\begin{gather}
\label{satb}
S_aT_b\big(v_{a,b}(x,y)\big)+T_a\big(u_{a,b}(x,y)\big)S_b=
S_bT_a(x)+T_b(y)S_a
\qquad\text{for all }\,a,b\in\Omega,\,\ 
x,y\in V.
\end{gather}
\end{itemize}
Then the parametric map~\er{yabvv} satisfies the parametric YB equation \eqref{YB_eq}.
\end{proposition}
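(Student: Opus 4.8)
The plan is to deduce this proposition from Proposition~\ref{pasa} by working in a suitable associative algebra built from~$\mathfrak{B}$. Namely, I would take $\mathfrak{A}$ to be the square-zero extension of~$\mathfrak{B}$: as a set, $\mathfrak{A}$ consists of the formal sums $m_1+\varepsilon m_2$ with $m_1,m_2\in\mathfrak{B}$, and the product is determined by declaring $\varepsilon$ to be central and $\varepsilon^2=0$, so that $(m_1+\varepsilon m_2)(m_1'+\varepsilon m_2')=m_1m_1'+\varepsilon(m_1m_2'+m_2m_1')$. A direct check shows that this multiplication is associative, so $\mathfrak{A}$ is an associative algebra containing~$\mathfrak{B}$. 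I would then set $Q_a(x)=S_a+\varepsilon T_a(x)$ for $a\in\Omega$, $x\in V$, and verify that $Q_a$ together with the given $u_{a,b}$, $v_{a,b}$ satisfies the two hypotheses of Proposition~\ref{pasa}.

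For the refactorisation hypothesis~\eqref{quabx}, I would compute $Q_a(u_{a,b}(x,y))Q_b(v_{a,b}(x,y))=S_aS_b+\varepsilon\big(S_aT_b(v_{a,b}(x,y))+T_a(u_{a,b}(x,y))S_b\big)$ and $Q_b(y)Q_a(x)=S_bS_a+\varepsilon\big(S_bT_a(x)+T_b(y)S_a\big)$. The $\varepsilon^0$-parts coincide by~\eqref{sasb}, and equality of the $\varepsilon^1$-parts is exactly~\eqref{satb}; hence~\eqref{quabx} holds. For the rigidity hypothesis~\eqref{qqq}, a short computation using $\varepsilon^2=0$ and centrality of~$\varepsilon$ gives $Q_a(\hat x)Q_b(\hat y)Q_c(\hat z)=S_aS_bS_c+\varepsilon\big(T_a(\hat x)S_bS_c+S_aT_b(\hat y)S_c+S_aS_bT_c(\hat z)\big)$, and likewise with the hats removed. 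Since the $\varepsilon^0$-parts $S_aS_bS_c$ do not involve $\hat x,\hat y,\hat z$, equation~\eqref{qqq} for these $Q_a$ is equivalent to equality of the $\varepsilon^1$-parts for all $a,b,c\in\Omega$, i.e.\ to~\eqref{tss}; by assumption this forces $\hat x=x$, $\hat y=y$, $\hat z=z$, so~\eqref{qqq} holds.

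With both hypotheses verified, Proposition~\ref{pasa} applies and yields that the parametric map~\eqref{yabvv} satisfies the parametric YB equation~\eqref{YB_eq}, completing the proof. I do not expect a genuine obstacle: this is the standard dual-numbers (first-order/linear-approximation) device, and the only points requiring care are confirming associativity of~$\mathfrak{A}$ and matching the $\varepsilon$-linear terms of the products above \emph{exactly} with the left- and right-hand sides of~\eqref{satb} and~\eqref{tss}, keeping track of the order of the factors $S_a$, $T_b$, etc., since $\mathfrak{B}$ need not be commutative. It is precisely the commutativity assumption~\eqref{sasb} that makes the $\varepsilon^0$-parts of both~\eqref{quabx} and~\eqref{qqq} automatically equal, so that the entire content reduces to the $\varepsilon$-linear equations.
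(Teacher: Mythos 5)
Your proposal is correct and follows essentially the same route as the paper: the authors likewise form the square-zero extension of $\mathfrak{B}$ by a central $\varepsilon$ with $\varepsilon^2=0$, set $Q_a(w)=S_a+\varepsilon T_a(w)$, match the $\varepsilon$-linear parts of the products with~\eqref{satb} and~\eqref{tss}, and invoke Proposition~\ref{pasa}. (Only a minor quibble: for~\eqref{qqq} the $\varepsilon^0$-parts are $S_aS_bS_c$ on both sides in the same order, so commutativity~\eqref{sasb} is needed only for~\eqref{quabx}.)
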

\begin{proof} 
Consider the formal symbol $\varepsilon$.
Let $\mathfrak{A}$ be the set of formal sums $m_1+\varepsilon m_2$, 
where $m_1,m_2\in\mathfrak{B}$. 
The set $\mathfrak{A}$ is an associative algebra with the following operations
\begin{gather}
\notag
\text{for all }\,m_1,m_2,\tilde{m}_1,\tilde{m}_2\in\mathfrak{B}\qquad
(m_1+\varepsilon m_2)+(\tilde{m}_1+\varepsilon\tilde{m}_2)=
(m_1+\tilde{m}_1)+\varepsilon(m_2+\tilde{m}_2),\\
\lb{mpm}
(m_1+\varepsilon m_2)(\tilde{m}_1+\varepsilon\tilde{m}_2)=
(m_1\tilde{m}_1)+\varepsilon(m_1\tilde{m}_2+m_2\tilde{m}_1).
\end{gather}
Note that, for any $m,\tilde{m}\in\mathfrak{B}$, one has
$(\varepsilon m)(\varepsilon\tilde{m})=0$ in $\mathfrak{A}$.
Thus essentially we assume $\varepsilon^2=0$.

For each $a\in\Omega$, consider the map
\begin{gather}
\lb{qava}
Q_a\cl V\to\mathfrak{A},\qqquad 
Q_a(w)=S_a+\varepsilon T_a(w),\qqquad w\in V.
\end{gather}
Using~\er{mpm},~\er{qava}, we obtain
\begin{gather*}
\text{for all }\,x,y\in V\qqquad
Q_b(y)Q_a(x)=S_bS_a+\varepsilon\big(S_bT_a(x)+T_b(y)S_a\big),\\
Q_a\big(u_{a,b}(x,y)\big)Q_b\big(v_{a,b}(x,y)\big)=
S_aS_b+\varepsilon\big(S_aT_b\big(v_{a,b}(x,y)\big)+T_a\big(u_{a,b}(x,y)\big)S_b\big).
\end{gather*}
Hence equations~\er{sasb},~\er{satb} are equivalent to~\er{quabx}.
Similarly, equation~\er{tss} is equivalent to~\er{qqq}.
Therefore, we can use Proposition~\ref{pasa}.
\end{proof}
%As said above, Proposition~\ref{paeps} is used in Section~\ref{NLS-type-maps}
%(see Remarks~\ref{reps1},~\ref{reps2},~\ref{reps3}).

\section{Linear parametric Yang--Baxter maps related to Darboux transformations for NLS-type equations}
\label{NLS-type-maps}

It is well known that YB maps are closely related to quad-graph equations (namely, partial difference equations defined on an elementary square of the two-dimensional lattice),
see, e.g.,~\cite{abs2003,hjn-book,PTV} and references therein. 
Recall that Darboux and B\"acklund transformations can be employed in order to derive quad-graph equations~\cite{Adler,hjn-book,SPS,Nijhoff,QNC}. In particular, the Bianchi permutability of Darboux transformations yields integrable partial difference equations. Since the former permutability condition of Darboux transformations can be regarded 
as a matrix refactorisation problem similar to those described in Section~\ref{sybmr}, 
this suggests to consider matrix refactorisation problems for particular Darboux matrices 
in order to construct YB maps. 

In \cite{Sokor-Sasha, KP2019}, Darboux matrix refactorisation problems related 
to NLS-type partial differential equations were considered, and new birational parametric YB maps were constructed.
In this section, developing an idea from~\cite{Sokor-Sasha}, 
we demonstrate how to obtain linear parametric YB maps
(with nonlinear dependence on parameters), 
using linear approximations of such matrix refactorisation problems.
Relations of our results with those of~\cite{Sokor-Sasha} 
are discussed in Remark~\ref{rmikh} and in Subsection~\ref{saddnls}.

\subsection{Darboux transformations for NLS-type equations}
The Darboux transformations which were employed in \cite{Sokor-Sasha} are associated with AKNS-type Lax operators of the form $\mathcal{L}=D_x+U$, where $U=U(p,q;\lambda)$ 
is a $2\times 2$ matrix-function with zero trace.
%values in the Lie algebra $\mathfrak{sl}_2(\mathbb{C(\lambda)})$.
Here $p=p(x,t)$ and $q=q(x,t)$ are potential functions, which are solutions to a certain NLS-type equation,
and $\lambda\in\mathbb{C}$ is a parameter called the spectral parameter. 

In the considered Lax operators, $U$ depends rationally on $\lambda$, 
so $U$ can be viewed as a function of $p,q$ 
with values in the Lie algebra $\mathfrak{sl}_2(\mathbb{C}(\lambda))$, 
where $\mathbb{C}(\lambda)$ is the commutative algebra of rational functions of~$\lambda$. 

Following~\cite{Sokor-Sasha,SPS}, we say that, in this case, a \emph{Darboux transformation}
is determined by an invertible $2\times 2$ matrix $M$ (called a \emph{Darboux matrix}) such that
\begin{gather}
\lb{lpq}
M \mathcal{L} M^{-1}=M\big(D_x+U(p,q;\lambda)\big)M^{-1}=D_x-D_x(M)M^{-1}+MUM^{-1}=
D_x+U(\tilde{p},\tilde{q};\lambda),
\end{gather}
where functions $\tilde{p},\tilde{q}$ are also solutions of the same NLS-type equation. 
The matrix $M$ may depend on the potential functions $p,q,\tilde{p},\tilde{q}$,
the parameter $\lambda$, and some auxiliary functions.

Since the matrix $U$ has zero trace, from~\er{lpq} we see that the trace of $D_x(M)M^{-1}$ is zero, 
which yields $D_x(\det(M))=0$. This condition implies some relations between 
the potential functions $p,q,\tilde{p},\tilde{q}$ and the auxiliary functions that appear in the derivation of the Darboux matrix. 
Usually these relations allow one to express the auxiliary functions in terms of the potential functions. 
Depending on the complexity of the relations, sometimes one cannot derive a map from the Darboux matrix refactorisation problem, but a correspondence. However, in such cases, one can often derive a linear approximation to the map. This will be clear in the following subsections where we construct examples of linear parametric YB maps associated with Darboux transformations for the derivative NLS (DNLS) equation 
and a deformation of the derivative NLS (DDNLS) equation which first appeared in~\cite{MSY}.

All the Darboux transformations that are being used in the next subsections 
were constructed in~\cite{SPS} and are associated with the following Lax operators
\begin{align*}
\mathcal{L}_{DNLS}&=D_x+\lambda^2 \begin{pmatrix}1 & 0\\0 & -1\end{pmatrix}
+\lambda\begin{pmatrix}0 & 2p\\2q & 0\end{pmatrix},\\
\mathcal{L}_{DDNLS}&=D_x+\lambda^2 \begin{pmatrix}1 & 0\\0 & -1\end{pmatrix}
+\lambda\begin{pmatrix}0 & 2p\\2q & 0\end{pmatrix}
+\frac{1}{\lambda}\begin{pmatrix}0 & 2q\\2p & 0\end{pmatrix}
-\frac{1}{\lambda^2}\begin{pmatrix}1 & 0\\0 & -1\end{pmatrix},
\end{align*}
i.e., the spatial parts of the Lax pairs for the DNLS and DDNLS equations.

\subsection{A linear parametric Yang--Baxter map in the DNLS case}
\lb{sdnlsyb}

A Darboux matrix associated to the DNLS equation is the following
\begin{gather}\label{DNLS}
M=f\left(
\lambda^2
\begin{pmatrix}
 1 & 0\\
 0 & 0
\end{pmatrix}
+\lambda
\begin{pmatrix}
 0 & p\\
 \tilde{q} & 0
\end{pmatrix}
\right)
+
\begin{pmatrix}
1 & 0\\
 0 & 1
\end{pmatrix},
\end{gather}
where $p,\tilde{q}$ are potential functions, solutions to the DNLS equation, 
and $f$ is an arbitrary function which appears in the derivation of the Darboux matrix $M$. 
Moreover, $p,\tilde{q},f$ satisfy a system of differential-difference equations \cite{SPS}, 
which admits the following first integral
\begin{gather}
\lb{phif}
\Phi=f^2p\tilde{q}-f.
\end{gather}
That is, $D_x(\Phi)=0$, which is equivalent to $D_x(\det(M))=0$. 

Therefore, we can impose the relation $\Phi=\text{const}$, 
which allows us to determine the function~$f$. 
We set $\Phi=-a$, where $a\in\mathbb{C}$,
and replace $(p,\tilde{q})\rightarrow (\varepsilon x_1,\varepsilon x_2)$.
Then~\er{phif} becomes 
\begin{gather}
\lb{af2}
-a=f^2x_1x_2\varepsilon^2-f.
\end{gather}
Expanding in $\varepsilon$ around 0, we consider equation~\er{af2} up to $\mathcal{O}(\varepsilon^2)$
and take $f=a+\mathcal{O}(\varepsilon^2)$.

The matrix $M$ in \eqref{DNLS} now becomes
\begin{gather}\label{DNLS-2}
M(x_1,x_2,a)=
\lambda^2
\begin{pmatrix}
 a & 0\\
 0 & 0
\end{pmatrix}
+
\begin{pmatrix}
1 & 0\\
 0 & 1
\end{pmatrix}
+\varepsilon\lambda
\begin{pmatrix}
 0 & a x_1\\
 ax_2 & 0
\end{pmatrix}
+\mathcal{O}(\varepsilon^2).
\end{gather}
For the matrix $M_a(x_1,x_2)\equiv M(x_1,x_2,a)$ in \eqref{DNLS-2},
consider the matrix refactorisation problem
\begin{gather}
\label{mmmm}
M_a(u_1,u_2)M_b(v_1,v_2)=M_b(y_1,y_2)M_a(x_1,x_2)
\quad\text{up to $\mathcal{O}(\varepsilon^2)$}.
\end{gather}
After expanding \eqref{mmmm} in $\varepsilon$, we compute 
the coefficients of~$\varepsilon$ (i.e., the terms of degree~$1$ in~$\varepsilon$), 
which give the system of equations
\begin{gather}
\lb{linuv}
au_1+bv_1=ax_1+by_1,\qquad v_1=x_1,\qquad u_2=y_2,\qquad au_2+bv_2=ax_2+by_2.
\end{gather}
It is easy to check that the terms of degree~$0$ in~$\varepsilon$ in equation~\er{mmmm} cancel.
Therefore, equation~\er{mmmm} (considered up to $\mathcal{O}(\varepsilon^2)$) is equivalent to~\er{linuv}.

System~\er{linuv} can uniquely be solved for $u_1,u_2,v_1,v_2$, which gives the following map
\begin{gather}\label{Z2-YB}
Y_{a,b}\left(\begin{matrix}x_1\\ x_2\\ y_1\\y_2	\end{matrix}\right)=
\left(\begin{matrix} u_1\\ u_2 \\ v_1\\ v_2	\end{matrix}\right)_{a,b}\equiv
\begin{pmatrix}
1-\frac{b}{a} & 0 & \frac{b}{a} & 0 \\
 0 & 0 & 0 & 1 \\
 1 & 0 & 0 & 0 \\
0 & \frac{a}{b} & 0 & 1-\frac{a}{b} \\
\end{pmatrix}
\left(\begin{matrix}x_1\\ x_2\\ y_1\\y_2	\end{matrix}\right).
\end{gather}

It is easy to check that the corresponding matrices 
\begin{eqnarray*}
&A_{a,b}=\left(\begin{matrix}1-\frac{b}{a} & 0\\ 0 & 0	\end{matrix}\right),\qquad
B_{a,b}=\left(\begin{matrix} \frac{b}{a} & 0\\ 0 & 1	\end{matrix}\right),&\\
&C_{a,b}=\left(\begin{matrix}1 & 0 \\ 0 & \frac{b}{a}	\end{matrix}\right),\qquad 
D_{a,b}=\left(\begin{matrix}0 & 0\\  0 & 1-\frac{a}{b}	\end{matrix}\right)&
\end{eqnarray*}
obey relations \eqref{ABC-relations-a}--\eqref{ABC-relations-e}. 
Therefore, \eqref{Z2-YB} is a linear parametric YB map.
In Remark~\ref{reps1} we explain this by means of Proposition~\ref{paeps}.
\begin{remark}
\lb{reps1}
Let $V=\mathbb{C}^2$, $\Omega=\mathbb{C}$, and $\mathfrak{B}=\mat_2(\mathbb{C}[\lambda])$.
For $a\in\Omega=\mathbb{C}$, one can rewrite formula~\er{DNLS-2} as
$M(x_1,x_2,a)=S_a+\varepsilon T_a(x_1,x_2)+\mathcal{O}(\varepsilon^2)$, where
\begin{gather}
\lb{satar}
S_a=\lambda^2\begin{pmatrix}
 a & 0\\
 0 & 0
\end{pmatrix}
+ \begin{pmatrix}
1 & 0\\
 0 & 1
\end{pmatrix}\in\mathfrak{B},\qquad
T_a\cl\mathbb{C}^2\to\mathfrak{B},\,\quad
T_a(x_1,x_2)=\lambda\begin{pmatrix}
 0 & a x_1\\
 ax_2 & 0
\end{pmatrix},\,\quad x_1,x_2\in\mathbb{C}.
\end{gather}
Since we make computations up to $\mathcal{O}(\varepsilon^2)$, 
we can use formula~\er{mpm}.
Proposition~\ref{paeps} is applicable here, which explains why the map~\er{Z2-YB}
obtained from~\er{mmmm}, where $M_a(x_1,x_2)=M(x_1,x_2,a)$,
satisfies the parametric YB equation~\eqref{YB_eq}.

Equation~\er{mmmm} is studied up to $\mathcal{O}(\varepsilon^2)$ 
and is equivalent to linear equations~\er{linuv}.
This allows one to say that~\er{mmmm} can be regarded as a linear approximation 
of the matrix refactorisation problem corresponding to the matrix-function~\er{DNLS}.

Recall that the formula $M(x_1,x_2,a)=S_a+\varepsilon T_a(x_1,x_2)+\mathcal{O}(\varepsilon^2)$
with $S_a$, $T_a$ given by~\er{satar}
is obtained from~\er{DNLS}, using the substitution 
$(p,\tilde{q})\rightarrow (\varepsilon x_1,\varepsilon x_2)$.
We use also the formula $f=a+\mathcal{O}(\varepsilon^2)$ suggested by equation~\er{af2},
which is obtained from the equation $\Phi=-a$ by this substitution.
The substitution is chosen so that the resulting $S_a$, $T_a$ 
obey the conditions of Proposition~\ref{paeps}.
In particular, since the matrix $S_a$ in~\er{satar} is diagonal, we have~\er{sasb}.
\end{remark}

The map~\eqref{Z2-YB} can be derived from the map~\er{mhm} as follows.
Substituting $a_2=b_2=0$ in~\er{mhm} and denoting $a=a_1$, $b=b_1$, we obtain the map
\begin{gather}
\lb{mhm1}
\tilde{Y}_{a,b}\colon\mathbb{C}\times\mathbb{C}\to\mathbb{C}\times\mathbb{C},\qqquad
\tilde{Y}_{a,b}\begin{pmatrix}x\\ y\end{pmatrix}=
\begin{pmatrix}
\frac{a-b}{a} & \frac{b}{a} \\[0.4em]
1 & 0
 \end{pmatrix}
\begin{pmatrix}x\\ y\end{pmatrix},\qqquad x,y\in\mathbb{C}.
\end{gather}
Substituting $a_1=b_1=0$ in~\er{mhm} and denoting $a=a_2$, $b=b_2$, we obtain the map
\begin{gather}
\lb{mhm2}
\hat{Y}_{a,b}\colon\mathbb{C}\times\mathbb{C}\to\mathbb{C}\times\mathbb{C},\qqquad
\hat{Y}_{a,b}\begin{pmatrix}x\\ y\end{pmatrix}=
\begin{pmatrix}
0 & 1 \\[0.4em]
\frac{a}{b} & \frac{b-a}{b}
 \end{pmatrix}
\begin{pmatrix}x\\ y\end{pmatrix},\qqquad x,y\in\mathbb{C}.
\end{gather}
The map~\eqref{Z2-YB} is equal to the direct sum of the maps~\er{mhm1},~\er{mhm2}.

\subsection{The DDNLS case}
\lb{sbDDNLS}
A Darboux matrix associated to the DDNLS equation is given by
\begin{gather}\label{DDNLS}
M=
f
\begin{pmatrix}
 \lambda^2 & 0\\
 0 & \lambda^{-2}
\end{pmatrix}
+\lambda
\begin{pmatrix}
 0 & fp\\
 f\tilde{q} & 0
\end{pmatrix}
+fg
\begin{pmatrix}
 1 & 0\\
 0 & 1
\end{pmatrix}
+\frac{1}{\lambda}
\begin{pmatrix}
 0 & f\tilde{q}\\
 fp & 0
\end{pmatrix},
\end{gather}
where $p,\tilde{q},f,g$ satisfy a system of differential-difference equations~\cite{SPS},
which admits two first integrals $\Phi_1$, $\Phi_2$
\begin{gather}\label{first_ints}
\Phi_1=f^2(g-p\tilde{q}),\quad\qquad \Phi_2=f^2(g^2+1-p^2-\tilde{q}^2),
\end{gather}
i.e., $D_x(\Phi_i)=0$, $i=1,2$. The latter guarantees that $D_x(\det(M))=0$.

Hence we can impose the relations $\Phi_1=c_1$ and $\Phi_2=c_2$ 
for constants $c_1,c_2\in\mathbb{C}$, which allows us to determine the functions $f$, $g$.
As noticed in~\cite{Sokor-Sasha},
it is convenient to consider the case $c_1=\frac{1-k^2}{4}$, $c_2=\frac{1+k^2}{2}$ 
for a constant $k\in\mathbb{C}$. 
In the obtained relations $\Phi_1=\frac{1-k^2}{4}$, $\Phi_2=\frac{1+k^2}{2}$ and in~\er{DDNLS}
we replace $(p,\tilde{q})\rightarrow (\varepsilon x_1,\varepsilon x_2)$, which gives
\begin{gather}
\label{eqkeps}
f^2(g-x_1x_2\varepsilon^2)=\frac{1-k^2}{4},\quad\qquad 
f^2(g^2+1-x_1^2\varepsilon^2-x_2^2\varepsilon^2)=\frac{1+k^2}{2},\\
\label{epsDDNLS}
M=
f
\begin{pmatrix}
 \lambda^2 & 0\\
 0 & \lambda^{-2}
\end{pmatrix}
+\lambda
\begin{pmatrix}
 0 & f\varepsilon x_1\\
 f\varepsilon x_2& 0
\end{pmatrix}
+fg
\begin{pmatrix}
 1 & 0\\
 0 & 1
\end{pmatrix}
+\frac{1}{\lambda}
\begin{pmatrix}
 0 & f\varepsilon x_2\\
 f\varepsilon x_1& 0
\end{pmatrix}.
\end{gather}
Expanding in $\varepsilon$ around 0, 
we solve equations~\er{eqkeps} for $f$, $g$ up to $\mathcal{O}(\varepsilon^2)$
and obtain $4$ cases
\begin{gather}
\lb{f1k}
f=\frac{1-k}{2}+\mathcal{O}(\varepsilon^2),\quad\qquad g=\frac{1+k}{1-k}+\mathcal{O}(\varepsilon^2),\\
\lb{fk1}
f=\frac{k-1}{2}+\mathcal{O}(\varepsilon^2),\quad\qquad g=\frac{1+k}{1-k}+\mathcal{O}(\varepsilon^2),\\
\lb{f1pk}
f=\frac{1+k}{2}+\mathcal{O}(\varepsilon^2),\quad\qquad g=\frac{1-k}{1+k}+\mathcal{O}(\varepsilon^2),\\
\lb{mf1pk}
f=-\frac{1+k}{2}+\mathcal{O}(\varepsilon^2),\quad\qquad g=\frac{1-k}{1+k}+\mathcal{O}(\varepsilon^2).
\end{gather}

Let us study first the case~\er{f1k}.
Substituting~\er{f1k} in~\er{epsDDNLS} and denoting 
the obtained matrix by $M_k(x_1,x_2)$, one derives the formula
\begin{multline}\label{e-DDNLS}
M_k(x_1,x_2)=
\frac{1-k}{2}
\begin{pmatrix}
 \lambda^2 & 0\\
 0 & \lambda^{-2}
\end{pmatrix}
+\frac{1+k}{2}
\begin{pmatrix}
 1 & 0\\
 0 & 1
\end{pmatrix}
+
\varepsilon\frac{1-k}{2}
\begin{pmatrix}
 0 & \lambda x_1+\lambda^{-1} x_2\\
\lambda x_2+\lambda^{-1} x_1 & 0
\end{pmatrix}
+\mathcal{O}(\varepsilon^2).
\end{multline}
For the matrix $M_k(x_1,x_2)$ in \eqref{e-DDNLS}, 
consider the following matrix refactorisation problem
\begin{gather}
\lb{emmab}
M_a(u_1,u_2)M_b(v_1,v_2)=M_b(y_1,y_2)M_a(x_1,x_2)
\quad\text{up to $\mathcal{O}(\varepsilon^2)$}.
\end{gather}
After expanding \eqref{emmab} in $\varepsilon$, 
we consider the coefficients of $\varepsilon$, which give the system of equations
\begin{gather}
\lb{luyx}
u_2=y_2, \qqquad v_1=x_1,\\
\lb{labuvy}
\begin{split}
(a-1)(b-1)u_1-&(a-1)(b+1)u_2-(a+1)(b-1)v_2=\\
&=(a-1)(b-1)y_1-(a+1)(b-1)y_2-(b-1)(a+1)x_2,
\end{split}\\
\lb{labuva}
\begin{split}
(a-1)(b+1)u_1+&(a+1)(b-1)v_1-(a-1)(b-1)v_2=\\
&=(a+1)(b-1)y_1+(a-1)(b+1)x_1-(a-1)(b-1)x_2.
\end{split}
\end{gather}
The above system can uniquely be solved for $u_1,u_2,v_1,v_2$, which gives the map
\begin{gather}
\label{D2-YB}
Y_{a,b}\left(\begin{matrix}x_1\\ x_2\\ y_1\\y_2	\end{matrix}\right)=
\left(\begin{matrix} u_1\\ u_2 \\ v_1\\ v_2	\end{matrix}\right)_{a,b}\equiv
\begin{pmatrix}
 \frac{(a+1)(a-b)}{(a-1)(a+b)} & \frac{a-b}{a+b} & \frac{2 a (b-1)}{(a-1)(a+b)} & -\frac{a-b}{a+b} \\
 0 & 0 & 0 & 1 \\
 1 & 0 & 0 & 0 \\
 \frac{a-b}{a+b} & \frac{2 (a-1) b}{(b-1)(a+b)} & -\frac{a-b}{a+b} & -\frac{(a-b)(b+1)}{(b-1)(a+b)} \\
\end{pmatrix}
\left(\begin{matrix}x_1\\ x_2\\ y_1\\y_2	\end{matrix}\right).
\end{gather}

The matrix representing the linear map \eqref{D2-YB} satisfies 
the relations of Proposition~\ref{ABCD}. Namely, the corresponding matrices 
\begin{align*}
A_{a,b}&=\left(\begin{matrix} \frac{(a+1)(a-b)}{(a-1)(a+b)} & \frac{a-b}{a+b}\\ 0 & 0	\end{matrix}\right),\qquad
B_{a,b}=\left(\begin{matrix}  \frac{2 a (b-1)}{(a-1)(a+b)} & -\frac{a-b}{a+b}\\ 0 & 1	\end{matrix}\right),&\\
C_{a,b}&=\left(\begin{matrix} 1 & 0 \\ \frac{a-b}{a+b} &  \frac{2 (a-1) b}{(b-1)(a+b)} 	\end{matrix}\right),\qquad 
D_{a,b}=\left(\begin{matrix}0 & 0\\  -\frac{a-b}{a+b} & -\frac{(a-b)(b+1)}{(b-1)(a+b)} 	\end{matrix}\right)&
\end{align*}
obey relations \eqref{ABC-relations-a}--\eqref{ABC-relations-e}.
Hence \eqref{D2-YB} is a linear parametric YB map.
As shown in Remark~\ref{reqybm}, 
the map~\eqref{D2-YB} is equivalent to the YB map~\er{ybmss}, 
which appeared in \cite{Sokor-Sasha}.

\begin{remark}
\lb{reps2}
Similarly to Remark~\ref{reps1},
the fact that the map~\eqref{D2-YB} satisfies the parametric YB equation~\eqref{YB_eq}
is explained by Proposition~\ref{paeps}.

Equation~\er{emmab} is studied up to $\mathcal{O}(\varepsilon^2)$ 
and is equivalent to linear equations~\er{luyx},~\er{labuvy},~\er{labuva}.
This allows one to say that~\er{emmab} can be regarded as a linear approximation 
of the matrix refactorisation problem corresponding to the matrix-function~\er{DDNLS}.
\end{remark}

So in the case~\er{f1k}, using equation~\er{emmab}, we have obtained the map~\er{D2-YB}.
The case~\er{fk1} is obtained from~\er{f1k} by the change $f\mapsto -f$.
Since the matrix $M$ in~\er{epsDDNLS} is of the form $M=fN(x_1,x_2,g,\lambda,\varepsilon)$ 
for some matrix $N(x_1,x_2,g,\lambda,\varepsilon)$, 
the change $f\mapsto -f$ does not affect equation~\er{emmab}, 
hence the case~\er{fk1} gives the same map~\er{D2-YB}.

The case~\er{f1pk} is obtained from~\er{f1k} by the change $k\mapsto -k$, 
hence in this case we need to make the change $k\mapsto -k$ in the right-hand side of~\er{e-DDNLS}.
Then the above procedure gives the map~\er{D2-YB} with $a,b$ replaced by $-a,-b$.

The case~\er{mf1pk} is obtained from~\er{f1k} by the changes $f\mapsto -f$ and $k\mapsto -k$.
By the above arguments, in this case we get the map~\er{D2-YB} with $a,b$ replaced by $-a,-b$.

\subsection{More results on the DDNLS case}
\lb{saddnls}
Following~\cite{Sokor-Sasha}, consider again the relations
$\Phi_1=\frac{1-k^2}{4}$, $\Phi_2=\frac{1+k^2}{2}$ for a constant $k\in\mathbb{C}$,
where $\Phi_i$, $i=1,2$, are given by~\eqref{first_ints}.
In the relations $\Phi_1=\frac{1-k^2}{4}$, $\Phi_2=\frac{1+k^2}{2}$ and in~\er{DDNLS}
we replace $(fp,f\tilde{q})\rightarrow (\varepsilon x_1,\varepsilon x_2)$, which gives
\begin{gather}
\label{neqkeps}
f^2g-x_1x_2\varepsilon^2=\frac{1-k^2}{4},\qqquad
f^2g^2+f^2-x_1^2\varepsilon^2-x_2^2\varepsilon^2=\frac{1+k^2}{2},\\
\label{nepsDDNLS}
M=
f
\begin{pmatrix}
 \lambda^2 & 0\\
 0 & \lambda^{-2}
\end{pmatrix}
+\lambda
\begin{pmatrix}
 0 & \varepsilon x_1\\
 \varepsilon x_2& 0
\end{pmatrix}
+fg
\begin{pmatrix}
 1 & 0\\
 0 & 1
\end{pmatrix}
+\frac{1}{\lambda}
\begin{pmatrix}
 0 & \varepsilon x_2\\
 \varepsilon x_1& 0
\end{pmatrix}.
\end{gather}
Expanding in $\varepsilon$ around 0, 
we now solve equations~\er{neqkeps} for $f$, $fg$ up to $\mathcal{O}(\varepsilon^2)$
and obtain $4$ cases
\begin{gather}
\lb{1pk2}
f=\frac{1+k}{2}+\mathcal{O}(\varepsilon^2),\quad\qquad fg=\frac{1-k}{2}+\mathcal{O}(\varepsilon^2),\\
\lb{m1pk2}
f=-\frac{1+k}{2}+\mathcal{O}(\varepsilon^2),\quad\qquad fg=\frac{k-1}{2}+\mathcal{O}(\varepsilon^2),\\
\lb{1mk2}
f=\frac{1-k}{2}+\mathcal{O}(\varepsilon^2),\quad\qquad fg=\frac{1+k}{2}+\mathcal{O}(\varepsilon^2),\\
\lb{km12}
f=\frac{k-1}{2}+\mathcal{O}(\varepsilon^2),\quad\qquad fg=-\frac{1+k}{2}+\mathcal{O}(\varepsilon^2).
\end{gather}

Consider the case~\er{1pk2}. Substituting~\er{1pk2} in~\er{nepsDDNLS} 
and denoting the obtained matrix by $M_k(x_1,x_2)$, we derive the formula
\begin{gather}
\label{mform}
M_k(x_1,x_2)=
\frac{1+k}{2}
\begin{pmatrix}
 \lambda^2 & 0\\
 0 & \lambda^{-2}
\end{pmatrix}
+\frac{1-k}{2}
\begin{pmatrix}
 1 & 0\\
 0 & 1
\end{pmatrix}
+\varepsilon
\begin{pmatrix}
 0 &  \lambda x_1+\lambda^{-1} x_2\\
  \lambda x_2+\lambda^{-1} x_1& 0
\end{pmatrix}
+\mathcal{O}(\varepsilon^2).
\end{gather}
By the same procedure as in Subsection~\ref{sbDDNLS}, 
considering equation~\er{emmab} for the matrix~\er{mform}, 
one obtains the following linear parametric YB map
\begin{gather}
\lb{ybmss}
Y_{a,b}\left(\begin{matrix}x_1\\ x_2\\ y_1\\y_2	\end{matrix}\right)=
\left(\begin{matrix} u_1\\ u_2 \\ v_1\\ v_2	\end{matrix}\right)_{a,b}\equiv
\begin{pmatrix}
 \frac{(a-1) (a-b)}{(a+1) (a+b)} & \frac{a-b}{a+b} & \frac{2 a}{a+b} & -\frac{(a+1) (a-b)}{(b+1) (a+b)} \\
 0 & 0 & 0 & \frac{a+1}{b+1} \\
 \frac{b+1}{a+1} & 0 & 0 & 0 \\
 \frac{(a-b) (b+1)}{(a+1) (a+b)} & \frac{2 b}{a+b} & -\frac{a-b}{a+b} & -\frac{(a-b) (b-1)}{(b+1) (a+b)} \\
\end{pmatrix}
\left(\begin{matrix}x_1\\ x_2\\ y_1\\y_2	\end{matrix}\right),
\end{gather}
which appeared in \cite{Sokor-Sasha}.

\begin{remark}
\lb{reps3}
As said above, the linear map~\er{ybmss} is obtained 
from equation~\er{emmab} for~\er{mform}.
Similarly to Remarks~\ref{reps1},~\ref{reps2},
the fact that the map~\er{ybmss} satisfies the parametric YB equation~\eqref{YB_eq}
is explained by Proposition~\ref{paeps}.
\end{remark}

In the case~\er{m1pk2} one obtains the same map.
In the cases~\er{1mk2},~\er{km12} one derives
the map~\er{ybmss} with $a,b$ replaced by $-a,-b$.

Let $l\in\mathbb{C}$, $l\neq 0$. 
Applying Corollary~\ref{clc} to the map~\eqref{ybmss}, we get the map
\begin{gather}
\lb{lybmss}
Y^l_{a,b}\colon\mathbb{C}^2\times\mathbb{C}^2\to\mathbb{C}^2\times\mathbb{C}^2,\qquad
Y^l_{a,b}\left(\begin{matrix}x_1\\ x_2\\ y_1\\y_2	\end{matrix}\right)=
\begin{pmatrix}
 \frac{l(a-1) (a-b)}{(a+1) (a+b)} & \frac{l(a-b)}{a+b} & \frac{2 a}{a+b} & -\frac{(a+1) (a-b)}{(b+1) (a+b)} \\
 0 & 0 & 0 & \frac{a+1}{b+1} \\
 \frac{b+1}{a+1} & 0 & 0 & 0 \\
 \frac{(a-b) (b+1)}{(a+1) (a+b)} & \frac{2 b}{a+b} & -\frac{a-b}{l(a+b)} & -\frac{(a-b) (b-1)}{l(b+1) (a+b)} \\
\end{pmatrix}
\left(\begin{matrix}x_1\\ x_2\\ y_1\\y_2	\end{matrix}\right).
\end{gather}
For $l\neq 1$ the YB map~\er{lybmss} is new.

\begin{remark}
\lb{reqybm}
The following observation was brought to us by A.V.~Mikhailov.
Consider the matrix
$$
M_{a,b}=
\begin{pmatrix}
 \frac{(a+1)(a-b)}{(a-1)(a+b)} & \frac{a-b}{a+b} & \frac{2 a (b-1)}{(a-1)(a+b)} & -\frac{a-b}{a+b} \\
 0 & 0 & 0 & 1 \\
 1 & 0 & 0 & 0 \\
 \frac{a-b}{a+b} & \frac{2 (a-1) b}{(b-1)(a+b)} & -\frac{a-b}{a+b} & -\frac{(a-b)(b+1)}{(b-1)(a+b)} \\
\end{pmatrix}
$$
from~\er{D2-YB} and the diagonal matrix $U_{a,b}=\mathrm{diag}(a-1,a-1,b-1,b-1)$. 
The matrix $U_{a,b}M_{a,b}U_{a,b}^{-1}$ is equal to the $4\times 4$ matrix in~\er{ybmss} 
after the change $a\mapsto -a$, $\,b\mapsto -b$.
This implies that the linear parametric YB maps~\er{D2-YB} and~\er{ybmss}
are equivalent up to a change of the basis in~$\mathbb{C}^2\times\mathbb{C}^2$ 
and the change $a\mapsto -a$, $b\mapsto -b$.
\end{remark}

\section{Yang--Baxter maps associated with matrix groups}
% $\GL_n(\mathbb{K})$}
\label{sYBmvb}

%The goal of this section is to obtain the linear parametric YB map~\er
%In this section $\mathbb{K}$ is either $\mathbb{R}$ or $\mathbb{C}$.
Let $\md$ be a group. One has the following YB map~\cite{Drin92}
\begin{gather}
\lb{fxyx}
\fmp\cl \md\times \md\to \md\times \md,\qquad
\fmp(x,y)=(x,xyx^{-1}),\qquad
x,y\in \md.
\end{gather}

Assume that $\mathbb{K}$ is either $\mathbb{R}$ or $\mathbb{C}$.
Let $n\in\zsp$ and consider the group
$\md=\GL_n(\mathbb{K})\subset\mat_n(\mathbb{K})$.
Then $\md$ is a manifold, and for each $x\in\md=\GL_n(\mathbb{K})$ 
one has the tangent space $T_x\md\cong\mat_n(\mathbb{K})$.
Set $\mm=\mat_n(\mathbb{K})$.
The tangent bundle of the manifold~$\md$ can be identified with the trivial bundle
$\md\times\mm\to\md$.
%\begin{gather}
%\notag
%\tau\cl\GL_n(\mathbb{K})\times\mat_n(\mathbb{K})\to\GL_n(\mathbb{K}).
%\end{gather}

For $\md=\GL_n(\mathbb{K})$, the YB map~\er{fxyx} is an analytic diffeomorphism of 
the manifold $\md\times\md$.
The differential~$\dft\fmp$ of this diffeomorphism~$\fmp$ can be identified with the following map
\begin{gather}
\lb{dfxyx}
\begin{gathered}
\dft\fmp\cl 
(\md\times\mm)\times(\md\times\mm)
\to
(\md\times\mm)\times(\md\times\mm),\\
\dft\fmp\big((x,M_1),(y,M_2)\big)=\left(\Big(x,M_1\Big),
\Big(xyx^{-1},\frac{\partial}{\partial\varepsilon}\Big|_{\varepsilon=0}
\big((x+\varepsilon M_1)(y+\varepsilon M_2)(x+\varepsilon M_1)^{-1}\big)\Big)\right),
\end{gathered}\\
\notag
x,y\in\md=\GL_n(\mathbb{K}),\qqquad M_1,M_2\in\mm=\mat_n(\mathbb{K}).
\end{gather}
Since $\fmp$ is a YB map, its differential $\dft\fmp$ is a YB map as well.

Let $\Omega\subset\md$ be an abelian subgroup of~$\md$.
Denote by $\ybm\cl(\Omega\times\mm)\times(\Omega\times\mm)\to(\Omega\times\mm)\times(\Omega\times\mm)$ 
the restriction of the map $\dft\fmp$ to the subset 
$(\Omega\times\mm)\times(\Omega\times\mm)\subset(\md\times\mm)\times(\md\times\mm)$.
As $\dft\fmp$ is a YB map, $\ybm$ is a YB map as well.

Let $a,b\in\Omega$. Since $ab=ba$, computing~\er{dfxyx} for $x=a$ and $y=b$, we obtain
\begin{gather}
\lb{ybmgl}
\begin{gathered}
\ybm\colon
\big(\Omega\times\mat_n(\mathbb{K})\big)\times\big(\Omega\times\mat_n(\mathbb{K})\big)\to
\big(\Omega\times\mat_n(\mathbb{K})\big)\times\big(\Omega\times\mat_n(\mathbb{K})\big),\\
\ybm\big((a,M_1),(b,M_2)\big)=\big((a,M_1),
(b,aM_2a^{-1}-bM_1a^{-1}+M_1ba^{-1})\big).
\end{gathered}
\end{gather}
Similarly to Remark~\ref{rnonl},
the YB map~\er{ybmgl} can be interpreted as the following linear parametric YB map
\begin{gather}
\lb{pyblg}
\begin{gathered}
Y_{a,b}\colon\mat_n(\mathbb{K})\times\mat_n(\mathbb{K})\to
\mat_n(\mathbb{K})\times\mat_n(\mathbb{K}),\\
Y_{a,b}(M_1,M_2)=(M_1,aM_2a^{-1}-bM_1a^{-1}+M_1ba^{-1}),
\end{gathered}
\end{gather}
with parameters $a,b\in\Omega$.

Let $l\in\mathbb{K}$, $l\neq 0$.
Applying Corollary~\ref{clc} to the map~\eqref{pyblg}, we obtain the linear parametric YB map
\begin{gather}
\lb{lpyblg}
\begin{gathered}
Y^l_{a,b}\colon\mat_n(\mathbb{K})\times\mat_n(\mathbb{K})\to
\mat_n(\mathbb{K})\times\mat_n(\mathbb{K}),\\
Y^l_{a,b}(M_1,M_2)=(lM_1,l^{-1}aM_2a^{-1}-bM_1a^{-1}+M_1ba^{-1}),
\end{gathered}\\
\notag
a,b\in\Omega,\qquad\text{$\Omega$ is an abelian subgroup of $\GL_n(\mathbb{K})$}.
\end{gather}
In the above construction of~\er{lpyblg} we have assumed
that $\mathbb{K}$ is either $\mathbb{R}$ or $\mathbb{C}$, 
in order to use tangent spaces and differentials.
Now one can check that~\er{lpyblg} is a parametric YB map for any field $\mathbb{K}$.
%The map~\er{lpyblg} seems to be new.

\section{Some generalisations}
\lb{sgener}

In this section we introduce certain generalisations
of some notions and constructions considered in the paper.

Let $E$, $\Omega$ be topological spaces and $\psi\cl E\to\Omega$ be a fibre bundle.
Consider a YB map $\ybm\cl E\times E\to E\times E$ satisfying $\pi\ybm=\pi$, where
\begin{gather}
\lb{pipsi}
\pi=\psi\times\psi\cl E\times E\to\Omega\times\Omega.
\end{gather}
Such YB maps generalise parametric YB maps discussed in Remark~\ref{rnonl},
which correspond to the case of the trivial bundle $\Omega\times V\to V$, 
where $\Omega$ and $V$ are topological spaces.

Suppose that $\psi\cl E\to\Omega$ is a vector bundle 
and consider the vector bundle $\psi\times\psi\cl E\times E\to\Omega\times\Omega$.
Suppose further that a YB map $\ybm\cl E\times E\to E\times E$ satisfies $\pi\ybm=\pi$
for $\pi$ given by~\er{pipsi}, and $\ybm$ is linear along the fibres of $\psi\times\psi$.
Such YB maps generalise linear parametric YB maps, 
which correspond to the case of the trivial vector bundle $\Omega\times V\to \Omega$, 
where $V$ is a vector space.

Let $M$ be a (smooth or complex-analytic) manifold 
and $Y\cl M\times M\to M\times M$ be a (smooth or complex-analytic) YB map.
Consider the tangent bundle $\tau\cl TM\to M$ of~$M$ and the differential 
$$
\dft Y\cl TM\times TM\to TM\times TM
$$
of the map $Y$. Since $Y$ is a YB map, its differential $\dft Y$ is a YB map as well.
(This follows from general properties of the differentials of smooth and analytic maps.)

For any subset $S\subset M$ satisfying $Y(S\times S)\subset S\times S$, one has 
$$
\dft Y\big(\tau^{-1}(S)\times\tau^{-1}(S)\big)\subset\tau^{-1}(S)\times\tau^{-1}(S), 
$$
and the restriction of the YB map $\dft Y$ to the subset 
$
\tau^{-1}(S)\times\tau^{-1}(S)\subset TM\times TM
$
is a YB map as well.
This generalises the construction from Section~\ref{sYBmvb}, 
which corresponds to the case when $M=\md=\GL_n(\mathbb{K})$, 
where $\mathbb{K}$ is $\mathbb{R}$ or $\mathbb{C}$, 
the map $Y=\fmp$ is given by~\er{fxyx}, and $S=\Omega$ is an abelian subgroup 
of the group $M=\GL_n(\mathbb{K})$.

We plan to study these generalisations in future works.

\section{Conclusions}
\label{sconc}

In this paper we have presented a number of general results 
on linear parametric YB maps,
including clarification of the structure of the nonlinear algebraic relations that define them
and several transformations which allow one to obtain new such maps from known ones.
Also, methods to construct such maps have been described.
In particular, we have demonstrated how to obtain linear parametric YB maps 
(with nonlinear dependence on parameters) 
from nonlinear Darboux transformations of some Lax operators, 
using linear approximations of matrix refactorisation problems corresponding to Darboux matrices.

As illustrative examples, the following new linear parametric YB maps 
with nonlinear dependence on parameters have been presented.
\begin{itemize}
\item
For each nonzero constant $l\in\mathbb{C}$,  
we have the linear parametric 
YB maps~\er{ylab},~\er{lybmss} with parameters~$a,b$.
For $l\neq 1$ these maps are new.
For $l=1$ the map~\er{ylab} appeared in~\cite{dim-mull} 
and \er{lybmss} appeared in~\cite{Sokor-Sasha}.
\item
The map \er{lpyblg} is new for 
each nonzero constant $l\in\mathbb{K}$, 
where $\mathbb{K}$ is any field 
(e.g., $\mathbb{K}=\mathbb{R}$ or $\mathbb{K}=\mathbb{C}$).
Note that \er{lpyblg} actually represents an infinite collection of 
linear parametric YB maps corresponding to 
abelian subgroups $\Omega$ of the matrix groups $\GL_n(\mathbb{K})$, $\,n\in\zsp$.
\end{itemize}

As discussed in Remark~\ref{rinvol},
in terms of dynamics, noninvolutive maps are more interesting than involutive ones,
but known examples of YB maps very often turn out to be involutive.
The maps~\er{ylab},~\er{lybmss},~\er{lpyblg} are noninvolutive.

In Section~\ref{sgener}, using fibre bundles and vector bundles, 
we have introduced certain generalisations
of some notions and constructions considered in the paper.
We plan to study these generalisations in future works.

Also, motivated by the results of this paper,
we suggest the following directions for future research:
\begin{itemize}
\item Develop similar approaches 
for entwining Yang--Baxter maps which are set-theoretical solutions 
to the parametric, entwining Yang--Baxter equation which reads
\begin{gather}\label{entYB}
S^{12}_{a,b}\circ R^{13}_{a,c} \circ T^{23}_{b,c}=T^{23}_{b,c}\circ R^{13}_{a,c} \circ S^{12}_{a,b}.
\end{gather}
Here $S$, $R$, $T$ are maps from $V\times V$ to $V\times V$ depending on two parameters from $\Omega$ 
for some sets $V$, $\Omega$.
The maps $S^{12}_{a,b}$, $R^{13}_{a,c}$, $T^{23}_{b,c}$ from $V\times V\times V$ 
to $V\times V\times V$ with parameters $a,b,c\in\Omega$ are constructed from $S$, $T$, $R$ in the standard way 
(see, e.g.,~\cite{KoulPap11,Pavlos2019,KP2019}).
If $S=R=T\equiv Y$, then equation \eqref{entYB} becomes the parametric YB 
equation~\eqref{YB_eq}. 
\item Extend the methods of this paper to the case 
of the functional (Zamolodchikov's) Tetrahedron equation, 
which can be regarded as a higher-dimensional generalisation of the (parametric) YB equation. 
A number of interesting results on relations of the functional Tetrahedron equation to integrable systems
are known (see, e.g.,~\cite{dim-mull,Kashaev,Kassotakis-Tetrahedron,KRt20,Talalaev}
and references therein); 
however, it has not yet attracted the same attention as the YB equation. 
\end{itemize}

\section*{Acknowledgements} 
This work is supported by the Russian Science Foundation (grant No. 20-71-10110).

We would like to thank A.V.~Mikhailov and D.V.~Talalaev for useful discussions.

\end{document}